\documentclass[10pt,twocolumn,twoside]{IEEEtran}
\usepackage{times}
\usepackage{amsmath,epsfig}
\usepackage{amssymb}
\usepackage{amsfonts}
\usepackage{array,dcolumn}

\usepackage{algorithm, algorithmic}
\usepackage{subcaption}
\usepackage{psfrag}
\usepackage{balance}
\usepackage{cite}
\usepackage[all]{xy}
\interdisplaylinepenalty=2500
\usepackage{dsfont}
\usepackage{hyperref}
\usepackage{url}
\usepackage{xcolor}
\usepackage[nolist]{acronym}
\usepackage{float}
\usepackage{threeparttable}
\usepackage{cancel}
\usepackage[official]{eurosym}
\usepackage{graphicx}
\usepackage[utf8]{inputenc}
\usepackage[english]{babel}
\usepackage{amsthm}
\usepackage[subnum]{cases}
\usepackage[nocomma]{optidef}
\usepackage[normalem]{ulem}
\usepackage{enumitem}

\usepackage[utf8]{inputenc}
\usepackage{pgfplots}
\usepackage{booktabs, adjustbox}
\DeclareUnicodeCharacter{2212}{−}
\usepgfplotslibrary{groupplots,dateplot}
\usetikzlibrary{patterns,shapes.arrows}
\pgfplotsset{compat=newest}
\usetikzlibrary{shapes.geometric, arrows}

\usepackage{multirow}

\usepackage{filecontents, pgffor}
\usepackage{listings, xcolor}

\usepackage{xcolor, colortbl}
\usepackage{mathtools,amssymb,lipsum, nccmath}
\usepackage{cuted}
\usepackage[normalem]{ulem}

\newcommand{\rv}[1]{\textcolor{black}{#1}}
\newcommand{\rrv}[1]{\textcolor{black}{#1}}

\newtheorem{definition}{Definition}

\newtheorem{Remark}{Remark}

\newtheorem{theorem}{Theorem}

\usepackage{tikz}
\usepackage{pgfplots}
\pgfplotsset{compat=1.10}
\usepackage{gincltex}
\usepgfplotslibrary{fillbetween}
\usetikzlibrary{patterns}

\def\BibTeX{{\rm B\kern-.05em{\sc i\kern-.025em b}\kern-.08em
		T\kern-.1667em\lower.7ex\hbox{E}\kern-.125emX}}


\hypersetup{
	colorlinks = true,
	linkcolor = blue,
	anchorcolor = black,
	citecolor = red,
	filecolor = blue,
	urlcolor = blue
}

\hyphenation{op-tical net-works semi-conduc-tor}

\begin{document}
	\title{Strategic Coalition for Data Pricing \\ in IoT Data Markets}
	\markboth{IEEE Journal Submission}%
	{}
	\author{Shashi Raj Pandey,%
		\thanks{Shashi Raj Pandey and Petar Popovski are with the Connectivity Section, Department of Electronic Systems, Aalborg University, Denmark. Email: \{srp, petarp\}@es.aau.dk.}
		\textit{IEEE Member}, %
		Pierre Pinson,%
		\thanks{Pierre Pinson is with Dyson School of Design Engineering, Imperial College London, UK. Email: \{p.pinson\}@imperial.ac.uk. He is also affiliated to the Technical University of Denmark, Department of Technology, Management and Economics.} \textit{IEEE Fellow}, %
		and Petar Popovski, \textit{IEEE Fellow}
		\thanks{This work was supported by the Villum Investigator Grant “WATER” from the Velux Foundation, Denmark.}}
	
	\maketitle 
	\begin{abstract}
		\rv{This paper establishes a market for trading Internet of Things (IoT) data that is used to train machine learning models. The data, either raw or processed, is supplied to the market platform through a network, and the price of such data is controlled based on the value it brings to the machine learning model under the adversity of the correlation property of data. Eventually, a simplified distributed solution for a data trading mechanism is derived that improves the mutual benefit of devices and the market. Our key proposal is an efficient algorithm for data markets that jointly addresses the challenges of availability and heterogeneity in participation, as well as the transfer of trust and the economic value of data exchange in IoT networks. The proposed approach establishes the data market by reinforcing collaboration opportunities between devices with correlated data to limit information leakage. Therein, we develop a network-wide optimization problem that maximizes the social value of coalition among the IoT devices of similar data types; at the same time, it minimizes the cost due to network externalities, i.e., the impact of information leakage due to data correlation, as well as the opportunity costs.} Finally, we reveal the structure of the formulated problem as a distributed coalition game and solve it following the simplified split-and-merge algorithm. Simulation results show the efficacy of our proposed mechanism design toward a trusted IoT data market, with up to $32.72\%$ gain in the average payoff for each seller.
	\end{abstract}
	\begin{IEEEkeywords}
		Internet of Things (IoT), IoT data market, data trading, incentive mechanism, information leakage, coalition game.
	\end{IEEEkeywords}
	\IEEEpeerreviewmaketitle
	
	\section{Introduction}
	\subsection{Context and Motivation}
	The massive volume of Internet of Things (IoT) devices and services lead to an exponential growth of IoT data \cite{data794}. Various networked cyber-physical systems (CPSs) are accumulating and processing data at a large scale, often contributing to training some learning model or carrying out an inference. For instance, massively distributed data, when integrated with Machine Learning (ML) tools, stimulate both \textit{real time} and \textit{non-real time} decision-making services that create a value of data in the IoT networks~\cite{popovski2021internet}. This brings the question of economic opportunities in IoT data markets, where data and its value to the services can be traded or exchanged. It is thus relevant to study the IoT data markets in terms of mechanisms for attaining the desired economic properties in offering learning services, such as prediction, detection, classification, forecasting, and similar. Furthermore, it is necessary to investigate strategies involved in the execution of such distributed cooperation amongst devices having data of value for IoT data markets. 
	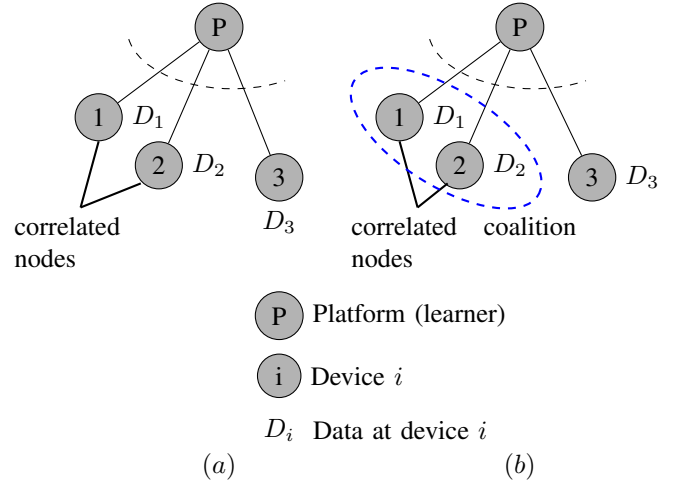
\begin{figure}
    \centering
\begin{tikzpicture}[scale=0.8]
  \node (p1)      at ( 0,0) [shape=circle,draw, fill=black!30] {P};
  \node (d1)    at ( -2,-1.5) [shape=circle,draw, fill=black!30,  label=right:$D_1$] {1};
  \node (d2)      at ( -1,-2.3) [shape=circle,draw, fill=black!30, label=right:$D_2$] {2};
  \node (d3) at ( 1,-2.5) [shape=circle,draw, fill=black!30, label=below:$D_3$] {3};

   \draw [-] (p1) to (d1);
   \draw [-] (p1) to (d2);
    \draw [-] (p1) to (d3);

    \draw [dashed] (-1.5,-0.2) arc [start angle=190, end angle=300, x radius=1.75cm, y radius=1cm];
    
    \draw[thick] (-2, -1.9) -- (-2.3, -3) node[anchor= west, text width=5em] {};
    \draw[thick] (-1.3, -2.6) -- (-2.3, -3) node[anchor= north, text width=5em] {correlated nodes};

  \node (p2)      at ( 5,0) [shape=circle,draw, fill=black!30] {P};
  \node (2d1)    at ( 3,-1.5) [shape=circle,draw, fill=black!30,  label=right:$D_1$] {1};
  \node (2d2)      at ( 4,-2.3) [shape=circle,draw, fill=black!30, label=right:$D_2$] {2};
  \node (2d3) at ( 6.2,-2.5) [shape=circle,draw, fill=black!30, label=right:$D_3$] {3};

   \draw [-] (p2) to (2d1);
   \draw [-] (p2) to (2d2);
    \draw [-] (p2) to (2d3);

    \draw [dashed] (3.5,-0.2) arc [start angle=190, end angle=300, x radius=1.75cm, y radius=1cm];
    
    \draw[thick] (3, -1.9) -- (3.3, -3) node[anchor= west, text width=5em] {};
    \draw[thick] (3.8, -2.6) -- (3.3, -3) node[anchor= north, text width=5em] {correlated nodes};
 
     \draw [dashed] (5.3,-2.8) [ rotate=-32, blue, thick] arc [start angle=0, end angle=360, x radius=1.8cm, y radius=0.8cm];
 
     \draw[thick] (5.5, -3)  node[anchor= north, text width=5em] {coalition};

    \node (p)      at (1,-4.8) [shape=circle,draw, fill=black!30, label=right:$\textrm{Platform (learner)}$] {P};
    
    \node (p)      at ( 1,-5.8) [shape=circle,draw, fill=black!30, label=right:$\textrm{Device} \ i $] {i};
    
    \node (p)      at ( 1,-6.7) [label=right:$\textrm{Data at device}\ i$] {$D_i$};   
        
    \node (label a)      at ( 0,-7.3) {$(a)$};  
    \node (label b)      at ( 5,-7.3) {$(b)$};  
  
\end{tikzpicture}
    \caption{\rv{Devices with correlated data and coalition formation when interacting with the platform (learner). The bold grey lines between devices and the platform indicate interaction interface.}}
    \label{fig:coalition}
\end{figure}
	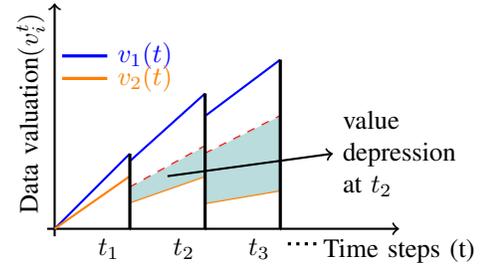
\begin{figure}[t!]
\centering
\begin{tikzpicture}
\draw[thick,->] (0, -0.1) -- (0, 3) node[anchor= south east, rotate = 90] {$\textrm{Data valuation} (v_i^t)$};
\draw[thick,->] (-0.1, 0) -- (4.58, 0) node[anchor=north] {$\textrm{Time steps (t)}$};
\draw[scale=1, domain=0:1, smooth, variable=\x, blue, thick] plot ({\x}, {\x});
\draw[scale=1, domain=1:2, smooth, variable=\x, blue, thick] plot ({\x}, {0.9*\x});
\draw[scale=1, domain=2:3, smooth, variable=\x, blue ,thick]  plot ({\x}, {0.75*\x});
\draw[scale=1, domain=0:1, smooth, variable=\x, orange ,thick]  plot ({\x}, {0.7*\x});
\draw[scale=1, domain=1:2, dashed, variable=\x, red ,thick, name path=a]  plot ({\x}, {0.55*\x});
\draw[scale=1, domain=2:3, dashed, variable=\x, red ,thick,, name path=a1]  plot ({\x}, {0.5*\x});    
\draw[scale=1, domain=1:2, smooth, variable=\x, orange ,thick,  name path=b]  plot ({\x}, {0.35*\x});
\draw[scale=1, domain=2:3, smooth, variable=\x, orange ,thick,, name path=b1]  plot ({\x}, {0.17*\x});
 
\tikzfillbetween[of=a and b]{teal!30,opacity=0.9};
\tikzfillbetween[of=a1 and b1]{teal!30,opacity=0.9};
\draw[thick,->] (1.5, 0.7) -- (3.7, 1) node[anchor= west, text width=5em] {value depression at $t_2$};

\draw[black,very thick, dotted] (3.1,-0.2) -- (3.5,-0.2);
\draw[black,very thick] (1,0) -- (1,1);
\draw[black,very thick] (1,0)  -- (1,0) node[anchor= north east] {$t_1$};
\draw[black,very thick] (2,0) -- (2,1.8);
\draw[black,very thick] (2,0)  -- (2,0) node[anchor= north east] {$t_2$};
\draw[black,very thick] (3,0) -- (3,2.25);
\draw[black,very thick] (3,0)  -- (3,0) node[anchor= north east] {$t_3$};
\draw[blue, very thick] (0.1,2.3) -- (0.7,2.3) node[anchor= west] {$v_1(t)$};
\draw[orange, very thick] (0.1,2) -- (0.7,2) node[anchor= west] {$v_2(t)$};
\end{tikzpicture}
\caption{Value depression due to leakage of correlated information in a two seller $i\in\{1,2\}$ one buyer scenario in a linear pricing scheme.}
\label{fig:value_depress}
\end{figure}
	\rrv{The two fundamental aspects of an IoT data market are \cite{acemoglu2022too, agarwal2019marketplace, ding2022optimal, mao2019pricing, niyato2016market}: (1) offered \emph{pricing}\footnote{Pricing indicates monetary reward or incentives of any form in general, such as discount vouchers.}, and (2) device participation in the data trading process.} An IoT device can be stimulated by the network to participate and share data of value. \rv{The stimulation is achieved by pricing signals that compensate the IoT device based on the data valuation and the cost of data privacy, with additional computational and communication costs. Most of the work in the IoT data market, therefore, focuses on economic literature of IoT networks \cite{ghosh2018pricing, mao2019pricing}: designing proper pricing mechanisms that factor in these costs with the device's willingness-to-sell data for several IoT-based services while ensuring revenue/utility maximization objectives, and data trading protocols, as in \cite{ding2022optimal, nguyen2021marketplace, nguyen2021modeling}. The authors in \cite{oh2020competitive} characterize such interactions for data exchange with a competitive data trading model. In \cite{pandey2020crowdsourcing}, the authors design an incentive mechanism that aligns strategies of distributed IoT nodes on participation to train a high-quality ML model via sharing data as informative local learning parameters. Likewise, several other works consider data valuation and pricing to realize IoT data markets offering heterogeneous services, such as ML training, statistical analysis, and forecasting with the distributed data \cite{pandey2022fedtoken, maddikunta2022incentive, pinson2022regression}.}
	
	\rv{As discussed, devices respond \emph{effectively} only at the right pricing signals that resolve their privacy concerns. In such terms, \emph{perfect data privacy} can be seen as a case of having data with \emph{infinite pricing}. Therein, a mild data privacy preference allows the market to adjust its pricing offers for the devices to join in the data trading, revealing a trade-off between data privacy and pricing. However, given the appropriate pricing signal for participation, the statistical properties of traded data over IoT networks raise fundamental challenges on the scope/impact of exploiting data trading mechanisms to realize the IoT data market. Particularly by adding a different dimension to consider data privacy, where one's data can be inferred (or learned) by abusing other's data. This downplays the pricing offers and potential value of data during participation; hence, demanding a meticulous investigation of data privacy, pricing and participation to realize IoT data markets, which has been overlooked in the recent literature \cite{ghosh2018pricing, mao2019pricing, ding2022optimal, pandey2020crowdsourcing, pandey2022fedtoken, maddikunta2022incentive, marjani2017big,saputra2019energy,liu2020privacy,vepakomma2021private}.}
	
	The seminal work \cite{acemoglu2022too} models the data market by relating the correlation among devices' data to price depression. To illustrate the main ideas of~\cite{acemoglu2022too} in a data streaming setting, consider an IoT data market without any data privacy consideration, as in Fig.~\ref{fig:coalition}(a), featuring devices with correlated data that interact and trade data with the platform\footnote{Note that in this case, the platform acts as a learner, such that we will use the terms learner and platform interchangeably.}. Assume that the device $\{1\}$ shares its dataset $\mathcal{D}_1$ with the platform, after which the device $\{2\}$ does the same. If the two datasets are correlated, and the market learns the total variation distance between $\mathcal{D}_1$ and $\mathcal{D}_2$, it can prioritize pricing for the earliest traded data and drop offered price rate for the latter. Consider Fig.~\ref{fig:value_depress}, which shows the evolution of data valuation over time $t\ge0$, assuming a linear pricing scheme. The market defines $v_i(t)$ as the valuation of the data of seller $i$ at time $t$.
	We observe that the valuation of data $v_2(t)$ for the seller $\{2\}$ drops in subsequent interactions with the market because device $\{1\}$ leaks information about the data of device $\{2\}$. 
	
	\rrv{To elaborate and formalize this problem, take the IoT data market scenario in Fig.~\ref{fig:datamarket}. The platform in Fig.~\ref{fig:datamarket} acts as an interface through which the devices, i.e., the buyers and sellers, interact in the IoT data market. Particularly, the platform is used for data storage and/or data and value exchange. A number of IoT devices connected to a platform collaboratively train a learning model and create value in a privacy-preserving manner; e.g., this could be a predictor in the Federated Learning (FL) setting \cite{saputra2019energy,hard2018federated,liu2020privacy, mcmahan2017communication}.} In this regard, the authors in \cite{agarwal2019marketplace} discuss a marketplace for data where a robust Shapley technique is developed to capture replicable properties of exchanged data and ways to capitalize the value when sharing them. In general, such a market offers incentives to the devices in a way that (i) stimulates their participation \cite{pandey2020crowdsourcing, le2021incentive, jiao2020toward}, and (ii) strikes a balance between the data privacy concerns, the trustworthiness of the data market, and the cost of data trading \cite{ali2020voluntary, acemoglu2022too}. \rrv{As explained before, and in \cite{acemoglu2022too}, the market may also leverage the correlated information or \emph{information leakage}, and other statistical properties of data between sellers \cite{gupta1999modeling, ichihashi2021economics, rubin1993statistical, agarwal2019marketplace}, to steer the pricing signals for self-benefit unilaterally. This leads to uncontrolled competition in data sharing, particularly due to data rivalry. As the market exploits more data, it causes devices to drop their participation out of mistrust or negligible pricing. This network externality creates a \textit{loop of mistrust}\footnote{\rrv{A scenario where the agents unknowingly behave competitively under the influence of the pricing signal due to the potential information leakage caused by data similarity.}}, by which the platform can manipulate the data market, causing and affecting device participation in the data trading process.} \rv{However, as discussed, an integrated view on data pricing and the participation for data/value exchange under the adversity of network externalities have not been studied in the related works yet. Motivated by this observation, the key contribution of this work is a method by which the devices with data privacy concerns can challenge the market to limit price depression: forming a coalition within devices with correlated data, as shown in Fig. \ref{fig:coalition}(b), and bargaining as a group instead of individually during participation. In the following, we systematically elaborate on the novelties and contributions of our work.}
	
	\begin{figure}[t!]
		\centering
		\includegraphics[width=\linewidth]{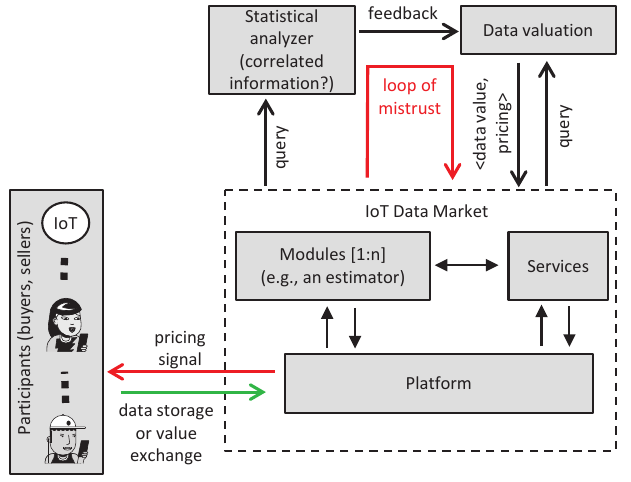}
		\caption{A schematic framework: the loop of mistrust in the IoT data market due to information leakage.}
		\label{fig:datamarket}
	\end{figure} 
	\subsection{Challenges and Contributions}\label{subsec:challenges}
	\rv{Our solution to counter the loop of mistrust consists of two steps with the following motivation: (1) form a coalition among a group of devices to tighten the information leakage within the group; (2) challenge the platform to execute the data trading mechanism in a trusted setting. Overall, we show this brings value in collaboration with improved pricing offers. Coalition formation limits data rivalry amongst sellers, lowers the impact of information leakage due to uncontrolled competitive data trading on pricing, and fosters \emph{availability} of devices to establish data markets. However, forming a coalition to realize a data market is not straightforward since the devices need to: (i) learn correlated statistical properties of data of the other devices, and without revealing it through the market, (ii) characterize and formalize relevant utility models that identify conditions for coalition formation and price determination amongst devices within the coalition, and (iii) handle time-complexity and efficiency of coalition formation at scale.}
	
	Based on the above discussion, we have identified the challenges expressed through the following questions: 
	\begin{itemize}
		\item \textbf{Q1}: \emph{How do IoT devices protect correlated information of their data without allowing the learner to manipulate pricing in the IoT data market?}
		\item \textbf{Q2}: \emph{How can the platform infer when the data from devices have equal marginal values or valuations?}
		\item \textbf{Q3}: \emph{What is the impact of device availability in the data trading process?}
	\end{itemize}
	
	\rv{Addressing \textbf{Q1} means preventing the market from identifying a possible correlation between different data types of devices and, further, monopolising the pricing and data trading strategies of devices. To eliminate this, we devise distributed coalitions of devices with similar data types, which enforces the platform to derive the marginal contribution of the coalition, or simply the coalition value, instead of the individual interactions. This reveals the game-theoretic interactions between the data holders (i.e., devices) and the IoT data market. Addressing \textbf{Q2} positions us to develop reasonable utility models for the IoT market that equally benefit the platform without hurting participation of devices in coalition due to information leakage and unreliable connectivity in the IoT networks\footnote{In this work, we realize unreliable connectivity in terms of participation. Particularly, connectivity is considered an uncontrollable factor in the IoT network, wherein we reflect the unreliable connectivity with the availability of market players, i.e., the devices for data trading.}. Then, we show this eventually leads to the formation of different coalition structures that balances individual payoffs and the stability of the coalition.  Addressing \textbf{Q3} positions us to evaluate the value of participation and sets us to develop an integrated framework that jointly incorporates \textbf{Q1} and \textbf{Q2} in the IoT data market design.} 
	
	\rv{As a main contribution, we develop a \emph{novel cooperation protocol}, termed multi-agent joint policy (MAJP), in an IoT data market that enables devices to maximize their value of participation in the coalition. This avoids the situation in which devices compete with each other, thereby possibly leading to leakage of statistical information about the data to be traded on the market. We first highlight the \textit{tension between data sellers and the data market over offered pricing and issues due to information leakage}, and then offer a distributed solution to overcome the presented challenges. In doing so, we first explore the characteristics of the formulated MAJP problem, which is intractable due to binary constraints and the coupling of variables. Therein, we devise a game-theoretic mechanism that offers a simplified, distributed solution to the MAJP problem where the proposed approach reinforces data sellers into collaboration for data trading with the objective of minimizing information leakage in a distrusted IoT data market. The proposed mechanism of distributed coalition games captures the properties of information leakage, the value of collaboration and the opportunity costs during the coalition formation. We show the developed solution is of low-complexity, with convergence guarantees. To that end, we derive stability conditions of the coalition game following the developed distributed coalition formation mechanism based on the merge-and-split algorithm \cite{apt2009generic} to realize a trusted IoT data market. Finally, through a sequence of numerical evaluations, we show the efficacy of our proposed mechanism.}
	
	The rest of the paper is organized as follows. Section \ref{sec:preliminaries} provides preliminaries, introduces a truthful IoT data market model and defines device data type. Section \ref{sec:problem formulation} develops the underlying utility models and presents the problem formulation as an optimization problem to obtain a multi-agent joint policy (MAJP) in a distributed coalition setting. Section \ref{sec: market design} develops the market design and proposes a coalition game solution to the MAJP problem with complexity analysis. Section \ref{sec:simulations} provides the performance evaluation of the proposed approach and shows comparative analysis with the competitive baselines. Finally, Section \ref{sec:conclusion} concludes this work.
	
	\section{Preliminaries and Problem Setting}\label{sec:preliminaries}
	\subsection{Network Setup for IoT Data Market}
	\begin{table}[t!]
		\centering
		\caption{Notation.}
		\label{tab:table1}
		\begin{tabular}{ll}
			\hline
			Notation & Definition\\
			\hline
			$\mathcal{M}, \mathcal{D}$ & Set of IoT devices and total data samples, respectively \\
			$\mathcal{D}_m$ & Set of data samples at device $m$ \\
			$\tilde{J}$ & Potential function of the global loss $J$ with \\& supporting mini-batch $z \in \mathcal{D}_m$ of local data samples \\  
			$\Phi_m$ &  Data type of device $m$\\           
			$\mathcal{S}$ & Coalition space \\             
			$\mathcal{D}_\mathcal{S}$ & Collective data samples with a set of devices in $\mathcal{S}$ \\        
			$\theta_m(z, \mathcal{D}_m)$ &  Average marginal contribution of device $m$ with data samples $\mathcal{D}_m$\\              
			$\xi_m(\theta_m)$ &  Composite value mapping as privacy preference profile of device $m$\\            
			$\Pi$ &  Set of coalition\\ 
			$\Omega$ &  Outcome space to evaluate performance on participation\\
			$\mathcal{A}$ & Action space \\              
			$a_m(t)$  & Participation variable for device $m$\\
			$p_m(t)$  & Offered pricing for device $m$\\            
			$\textbf{a}$ & Participation vector \\ 
			$\textbf{p}$  & Pricing vector \\
			$f_{\mathcal{S}}$ & Utility function defined over the coalition $\mathcal{S}$\\
			$v(\mathcal{S, \mathcal{A}})$ & Coalition value \\
			$\rho_\mathcal{S}$ & Average data dissimilarity in the coalition $\mathcal{S}$\\
			$c(\mathcal{S})$ & Cost of coalition \\
			$c(\Delta\phi_i)$ & Opportunity cost in coalition $i \in \mathcal{S}$ \\   
			$u_m(\mathcal{S})$ & Preference function of device $m$\\
			\hline
		\end{tabular}
	\end{table}
	
	\rrv{A typical IoT data market considers the interaction between the buyer and the seller owning IoT devices to trade the data or services (e.g., training learning models) with pricing signals \cite{niyato2016market, mao2019pricing, li2021capitalize}.} Consider a network with a finite set of IoT devices $\mathcal{M}$ as $|\mathcal{M}| = M$ training a global learner (e.g., a predictor\footnote{The proposed model is generic in the sense that it works well for training learning models, such as in FL, or an estimator minimizing the mean square error (MSE).}). The learner acts as an intermediary (for simplicity, we consider it as the buyer, or equivalently, the platform) purchasing data from the distributed devices (sellers); thus, forming a marketplace where strategic data sellers get incentives for their contribution to improving the model at the global learner. Indeed, to explore the strategic interaction between the devices and the learner, the following remark is useful.
	\begin{Remark}\label{rmk:game}
		In a game-theoretic setting, the said set of $\mathcal{M}$ participating devices are often referred to as ``agents". These finite set of agents hold explanatory data samples for specific learning tasks and aim to exchange it with the learner, fully or in a privacy-preserving manner, e.g., in FL \cite{mcmahan2017communication}, for training learning models of interest to the learner. Then, any rational agent is willing to participate in data trading, given offered pricing compensates their cost of participation.
	\end{Remark}
	Each device $m\in\mathcal{M}$ stores the data samples at time $t\in \mathcal{T} = \{1,2, \ldots, T-1\}$, defined as the local data set $\mathcal{D}_m (t)$ of size $D_m (t)$. Note that, in a typical distributed learning mechanism under the synchronous settings, the observation time $t$ is a single round of global interaction between the platform and the devices \cite{mcmahan2017communication, pandey2020crowdsourcing}. Then, the collective data sample size at time $t$ across the network is $D (t)=\sum_{m=1}^{M}D_{m}(t)$. In a supervised learning setting, $\mathcal{D}_m (t)$ is a collection set of data samples at device $m$ defined as $\{x_i, y_i\}_{i=1}^{D_m}$ with $x_i \in \mathbb{R}^d$ corresponding label $y_i \in \mathbb{R}$. The data samples are informative about the learning model; hence, bring value to the learner in terms of their contribution to improving the learning performance. We refer to it as \emph{data type}. Following this intuition, we associate the type of data samples available at the devices as a realization of random variable $\Phi$ (explained in Definition~\ref{def:type}). This setting can be extended to a more generalized form where each device shares a stream of data samples with features accounting for the time instance, such as in the time series prediction. 
	
	Consider $n \in \mathcal{N}(t)$ data samples available in the network for trading such that $D_m(t) \le |\mathcal{N}(t)|$. Then, the goal of a supervised learner is to learn a single model defined in Definition~\ref{def:learner}.
	\begin{definition}\label{def:learner}
		We define a supervised learner interested in minimizing the empirical risk with respect to parameter $w\in\mathbb{R}^d$ on all distributed data samples $D_m(t)$ as the finite-sum objective of the form
		\begin{equation}
			\underset{w \in \mathbb{R}^d}{\min}J(w, t) \ \  \textrm{where} \ \ J(w,t) := \sum\nolimits_{m = 1}^M \frac{D_m(t)}{D(t)}J_m (w, t).
			\label{eq:learning_problem}
		\end{equation}
	\end{definition}
	Then, the data market particularly looks at the contribution of each device $m$ in solving \eqref{eq:learning_problem}, which is expressed as the empirical risk with respect to the improvement in $w\in\mathbb{R}^d$ on their local data set $\mathcal{D}_m(t)$ as
	\begin{equation}
		J_{m}(w, t) := \frac{1}{D_m(t)} \sum \nolimits_{i = 1}^{D_m(t)}f_i(w).
		\label{eq:localloss}
	\end{equation}
	For simplicity and without loss of generality, we make a common assumption: $f_i(w)$ is $(1/\gamma)$-smooth function or a $L$-Lipschitz continuous function (cf. \cite{konevcny2016federated}); hence, ensuring convergence and stability of the solution.
	We note the network topology is not restrictive towards changes, i.e., the devices can perform data trading with each other via an arbitrator (a central learner) because they are connected through a network. Therefore, for the performance analysis of the proposed approach, later on, we provide a scenario-based statistical analysis.
	
	\subsection{Data Type and IoT Data Market Model}
	We make a common assumption that the market is interested in data exchange and, therefore, stimulates the devices with pricing signals based on the value of the traded data in improving learning performance. In our setting, this translates to first finding the type of data each device has and its possible influence on model training. Then, for the offered pricing $p_m >0$, every rational device $m$ determines its strategy for participation $a_m \in \{0,1\}$ in the data trading process so as to maximize their individual benefits; such strategies are captured in terms of the defined utility function. For this, we first define the type of data samples $\mathcal{D}_m$ of device $m$ by a random variable $\Phi_m$. To this end, we have the following definition. 
	\rv{\begin{definition}\label{def:type}
			We define the data type  $\Phi_m, \forall m \in \mathcal{M}$ as a composite measurement obtained following the privacy preference profile $\xi_m \in [0,1]$ of the devices to share their available data, fully or partially, and the average marginal contribution value of supporting mini-batch of data points $z \in \mathcal{D}_m$ defined as $\theta_m(z, \mathcal{D}_m)$ that brings to the learner.
	\end{definition}}
	\begin{Remark}\label{def:remark}
		Here we remark that sharing data partially means the exchange of local learning models, as in FL training, where instead of raw data, only data related to the local model parameters is shared in the data market; hence, the FL approach of data trading is considered `reasonably' privacy-preserving. In either case, i.e., for any arbitrary privacy preference, the exchange of devices' data brings an exogenous value in the data market, but only at the right price offered for their data type.
	\end{Remark}
	Following Definition \ref{def:type} and Remark \ref{def:remark}, we formally define the composite mapping $\xi_m(\theta_m(z, \mathcal{D}_m))$, where $\theta_m(z, \mathcal{D}_m)$ can be evaluated following a modified distributed Shapley \cite{jia2019towards} value for a known potential function $\tilde{J}$ of the global loss\footnote{Potential function reflects the performance metric in terms of learner's model accuracy, which is supplied to the market at the beginning of the economic interaction.} (as defined in \eqref{eq:learning_problem}) such that 
	\begin{equation}
		\theta_m(z, \mathcal{D}_m) = \underset{\underset{\tilde{D} \sim \mathcal{D}_m^{i-1}}{i\sim[B]}}{\mathbb{E}}\bigg[\tilde{J}(\tilde{D}\cup \{z\}) -\tilde{J}(\tilde{D})\bigg],
		\label{eq:type}
	\end{equation}
	where, respectively, $B$ is the mini-batch size, $\tilde{D}$ is the i.i.d. samples drawn from the available data samples $\mathcal{D}_m$ supported on $\mathcal{Z}$ with mini-batch of data point $z \in \mathcal{Z}$, and the potential function $\tilde{J}: \mathcal{Z} \rightarrow [0,1]$ defined by the output $0 \le \epsilon \le 1$ such that $|\nabla \tilde{J}(w^{(t)})|  \le \epsilon |\nabla \tilde{J}(w^{(t-1)})|$.
	
	More precisely, following \eqref{eq:type}, the mapping $\xi_m(\theta_m(z, \mathcal{D}_m)) \in [0,1]$ quantifies data type as the expected value of data and the device's preference profile, i.e., the willingness to trade data with the learner, to offer that value in the data market. For simplicity, we use shorthand $\xi_m $ for $\xi_m(\theta_m(z, \mathcal{D}_m))$, with $\xi_m = 0$ when the device $m$ reserves no privacy concern  on the shared data. \rv{Therein, devices undergo strategic participation at the right pricing. In practice, we observe heterogeneity in $\xi_m$, which is an important metric that captures the function of individual preference on sharing data (i.e., data privacy), and thereof, each device may not reveal its true data type or perform optimal local computation, as expected by the market, for the offered pricing scheme to participate in the data trading process. Hence, the learner face consequences of the partial knowledge in the state of information exchanged\footnote{This is often termed as \emph{information asymmetry}.}} in a setting where payments for traded data are provided after collecting them. We settle the aforementioned analysis with the formalization of an efficient trading mechanism in the proposed market model as below.
	\begin{definition}\label{def:mechanism}
		The proposed data trading mechanism is a tuple $(\Pi, \Omega, \mathbf{p}, \mathbf{a})$, where $\Pi$ is the coalition set following data types, $\Omega$ is the outcome space capturing the final learning performance, with $\Omega: \Phi \times \mathbf{a} \rightarrow [0,1]$, $\mathbf{p} = (p_m(\Phi_m))_{m \in \mathcal{S}_{\Pi}}$ is the pricing vectors defined for coalition $\mathcal{S}_{\Pi}$, with $a_{m \in \mathcal{S}_{\Pi}}=1$, and $\mathbf{a}= (a_m(\Phi_m))_{m \in \mathcal{S}_{\Pi}}$ is the vector of participation to tighten the information leakage due to data correlation.
	\end{definition}
	Definition~\ref{def:mechanism} hints at the underlying game-theoretic interaction between the learner and the devices for the mechanism design, summarized as the following. The learner (i) evaluates the received data (including the device's importance value towards privacy) and aims at (ii) quantifying the type of device's data so as to lower the offered pricing. Particularly, the traded data is evaluated for its contribution to improving the performance of the learning model, i.e., with $\Omega: \Phi \times \mathbf{a} \rightarrow [0,1]$. Whereas, the devices $m \in \mathcal{M}$ with correlated data samples form a coalition $m \in \mathcal{S}_{\Pi}$ to challenge the learner in hiding their own data type $\Phi_m$, or adopt sharing data in bundles to mitigate the information leakage and price depression. Following Definition~\ref{def:mechanism}, the mechanism aims to foster improved participation in training learning models while addressing the impacts of data correlation on the offered pricing.
	
	With these preliminaries, next, we formally start to tackle the research problems \textbf{Q1}, \textbf{Q2}, and \textbf{Q3}, raised in Section \ref{subsec:challenges}, with the considered simple setting. In the following, we present an overview of the problem formulation about data trading in the IoT data market and formalize the data valuation procedure as per the data properties, resulting in specific utility models.
	
	\section{Problem Formulation}\label{sec:problem formulation}
	\subsection{A basic setup}
	We revisit Definition \ref{def:type} and make an assumption that the vector of random variables $\Phi = [\Phi_1, \Phi_2, \ldots, \Phi_m]$ follows a joint normal distribution $\mathcal{N}(\mu_\Phi,\Sigma )$, where $\Sigma\in\mathbb{R}^{m\times m}$ is the covariance matrix. This setup provides convenience in further analysis; we simply assume this to reflect the presence of devices with the correlated data types. However, the developed framework is not limited to this assumption, as in the case of otherwise, the problem eventually boils down to the deconstruction of the data type, and our approach follows. Consider $a_m(t)$ as a binary decision variable for device $m$ to join the data market such that
	\begin{equation}
		a_{m}(t) =
		\begin{cases}
			1, \; \; \text{ if device $m$ joins the market at time $t$},\\
			0, \; \; \text{otherwise}.
		\end{cases}
	\end{equation}
	Then, in every round of interaction with the learner for the offered pricing $p_m(t), \forall m$, the interested device (if in the agreement to participate) trade their data as the mixture of their data type and the learning parameters such that $S_m =f(\xi_m)+N_m$, where $N_m\sim\mathcal{N}(0,1)$ is the Gaussian noise. In this regard, as shown in \cite{acemoglu2022too}, the learner can have an estimate of $\xi_m$ with the traded data $D_m$ with a solution to minimization of the estimation error of the data type. In doing so, the learner can employ both convex/non-convex loss function in \eqref{eq:type} that defines the data type of a device. This means, the learner can efficiently reconstruct the mapping function $\hat{\phi}=<g(\xi_m)|_{m \in \mathcal{M}}>$ to derive $\xi_m, \forall m$ precisely by solving the squared-error minimization problem as
	\begin{equation}
		\underset{g(\xi_m)}{\arg\min}\; \mathbb{E}\bigg[(\phi_m - g(\xi_m|D_m, a_m(t), p_m(t)))^2\bigg], \forall m \in \mathcal{M}.
		\label{eq:mse}
	\end{equation}

\begin{figure}
    \centering
    \begin{tikzpicture}
    
    \definecolor{darkgray176}{RGB}{176,176,176}
    \definecolor{lightgray204}{RGB}{204,204,204}
    \definecolor{orange}{RGB}{255,165,0}
    
    \begin{axis}[
    legend cell align={left},
    legend style={
      fill opacity=0.8,
      draw opacity=1,
      text opacity=1,
      at={(0.97,0.03)},
      anchor=south east,
      draw=lightgray204
    },
    tick align=outside,
    tick pos=left,
    x grid style={darkgray176},
    xlabel={Volume of correlated data samples},
    xmajorgrids,
    xmin=0.1, xmax=4,
    xtick style={color=black},
    xtick={1,2,3,4},
    xticklabels={10,20,30,40},
    y grid style={darkgray176},
    ylabel={Scaled valuation function},
    ymajorgrids,
    ymin=1.04067693290093, ymax=1.52187252221815,
    ytick style={color=black}
    ]
    \addplot [very thick, orange, dashed, mark=square*, mark size=2, mark options={solid}]
    table {%
    0 1.45939946174622
    1 1.49450528621674
    2 1.49925637245178
    3 1.49989938735962
    4 1.49998641014099
    };
    \addlegendentry{$A_0 = 0.2, \textrm{w/o} \; n_0$}
    \addplot [very thick, orange, mark=triangle*, mark size=2, mark options={solid}]
    table {%
    0 1.42142856121063
    1 1.43716371059418
    2 1.48202574253082
    3 1.49333846569061
    4 1.499431848526
    };
    \addlegendentry{$A_0 = 0.2, \textrm{w/} \; n_0$}
    \addplot [very thick, blue, dashed, mark=square*, mark size=2, mark options={solid}]
    table {%
    0 1.31729733943939
    1 1.47527384757996
    2 1.49665367603302
    3 1.499547123909
    4 1.49993872642517
    };
    \addlegendentry{$A_0 = 0.7, \textrm{w/o} \; n_0$}
    \addplot [very thick, blue, mark=triangle*, mark size=2, mark options={solid}]
    table {%
    0 1.06254947185516
    1 1.41216039657593
    2 1.46816420555115
    3 1.45881950855255
    4 1.49451267719269
    };
    \addlegendentry{$A_0 = 0.7, \textrm{w/} \; n_0$}
    \end{axis}
    \end{tikzpicture}
    \caption{\textbf{Case study} on variability in scaled valuation function at the learner in terms of model precision: two sellers one buyer scenario.}
    \label{fig:valuation_func}
\end{figure}
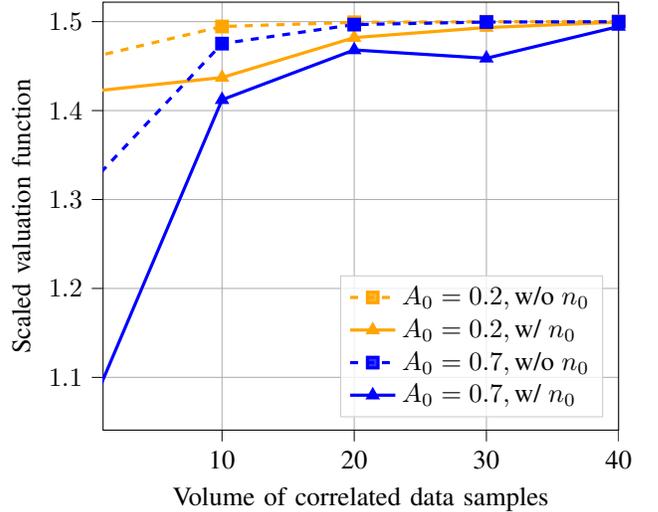
	In this regard, we outline the following three cases:
	\begin{itemize}
		\item \textbf{Case I:} When $\xi_m = 0$, i.e., $f^{-1}(\phi_m)=0, \forall m$, the learner adopts the following economic properties.
		\begin{enumerate}[label=(\roman*)]
			\item \textbf{Monotonicity:} If we have $\mathcal{D}_i \subseteq \mathcal{D}_j$ for any pair of devices $i,j\in \mathcal{M}$, then following the standard assumption of monotonicity in the valuation on data defined as $v(\cdot)$, we have $v(D_i)\leq v(D_j)$. 
			\item \textbf{Additive:} If we have $\mathcal{D}=\mathcal{D}_i\cap \mathcal{D}_j$ for any pair of devices $i,j\in \mathcal{M}$, then we have $v(D)\le v(D_i)+v(D_j)$.
		\end{enumerate}
		Considering these two properties, if the learner already received data $\mathcal{D}_i$ such that $\mathcal{D}_i, \mathcal{D}_j \subseteq \mathcal{D}$, then it formalizes the valuation function for the data $\mathcal{D}_j$ as
		\begin{equation}
			v^{D_i}(D_j) = \gamma\cdot v(\mathcal{D}_i\cap \mathcal{D}_j) + v(\mathcal{D}_j\setminus \mathcal{D}_i|\mathcal{D}_i\cap \mathcal{D}_j), \forall i,j \in \mathcal{M}
		\end{equation}
		where $\gamma\ge1$ is a design parameter quantifying the effect of available data samples on the learner. The right-hand part characterizes the marginal contribution of the remaining data samples. In fact, the valuation can be explicitly defined in terms of its contribution to improving learning performance. As an example, we discuss the following case study.\\
		\noindent \textbf{Example 1:} Take a log-concave valuation function of common data samples, an example, $\mathcal{D}_i\cap \mathcal{D}_j$, as defined according to the experimental results in \cite{zhan2020learning}, resulting the learning precision (or accuracy) $\zeta$ such as $J(\zeta)$, where $\zeta = 1-A_0e^{-2|\mathcal{D}_i\cap \mathcal{D}_j|(1-n_0)}$ for a known $A_0$ defined as per the learning problem \eqref{eq:learning_problem} and $n_0$ is the noise factor sampled from $\mathcal{N}(0.5,1)$. To put it in context, the noise factor $n_0$ simply captures the notion of unreliable connectivity. In this regard, Fig.~\ref{fig:valuation_func} identifies the variability in the scaled valuation function, measured from the buyer's perspective, in terms of model precision for a scenario with two sellers having correlated information and a buyer acting as the learner. We observe the addition of a random noise factor lowers the valuation function, i.e., a negative impact of unreliable connectivity on data trading, which is quite intuitive and straightforward. However, we also see a positive contribution of obtained information on the volume of correlated data samples that in return maximizes the valuation function of the learner.\qed
		
		\item \textbf{Case II:} When $\xi_m >0,\forall m$, the learner only has access to a subset of the device's data (in the best case scenario), or just partial data (for example, the learning parameters). In this later scenario, the learner can use several distance measures, such as L2-norm, cosine similarity and so on, to figure out correlated learning parameters. 
		\item \textbf{Case III:} When $\xi_m \ge 0, \forall m$, i.e., a particular case of I and II.
	\end{itemize}
	\begin{figure*}[t!]
    \centering
\begin{tikzpicture}
  \node (p1)      at ( 0,0) [shape=circle,draw, fill=black!30] {P};
  \node (d1)    at ( -3,-1.5) [shape=circle,draw, fill=black!30,  label=below:$D_1{=\{1,2,3,4,5\}}$] {1};
  \node (d2)      at ( -1,-2.3) [shape=circle,draw, fill=black!30, ,  label=below:$D_2{=\{1,2\}}$] {2};
  
   \draw [-] (p1) to (d1);
   \draw [-] (p1) to (d2);
   
    \draw [-, thick] (1.2,1.5) to (1.2,-4.2);
  \node (p2)      at ( 5.5,0) [shape=circle,draw, fill=black!30] {P};
  \node (2d1)    at ( 3,-1) [shape=circle,draw, fill=green!50,  label=below:$D_1{=\{1,2,3,4,5\}}$] {1};
  \node (2d2)      at ( 5,-2.3) [shape=circle,draw, fill=black!30, ,  label=below:$D_2{=\{1,2\}}$] {2};
  
   \draw [-] (p2) to (2d1);
   \draw [-] (p2) to (2d2);

    \draw [dashed] (3.7,-0.1) arc [start angle=190, end angle=300, x radius=1.75cm, y radius=1cm];

  \node (label a)      at ( 5,-3.8) {$\textrm{Case I}$};  
  
  \node (p3)      at ( 10.5,0) [shape=circle,draw, fill=black!30] {P};
  \node (3d1)    at ( 8,-1) [shape=circle,draw, fill=green!50,  label=below:$D_2{=\{1,2\}}$] {2};
  \node (3d2)      at ( 10,-2.3) [shape=circle,draw, fill=black!30, ,  label=below:$D_1{=\{1,2,3,4,5\}}$] {1};
  
   \draw [-] (p3) to (3d1);
   \draw [-] (p3) to (3d2);

    \draw [dashed] (8.7,-0.1) arc [start angle=190, end angle=300, x radius=1.75cm, y radius=1cm];
    \node (label a)      at ( 10,-3.8) {$\textrm{Case II}$}; 

\end{tikzpicture}
    \caption{An illustrative experimental setup for data trading and value exchange.}
    \label{fig:experimental}
\end{figure*}
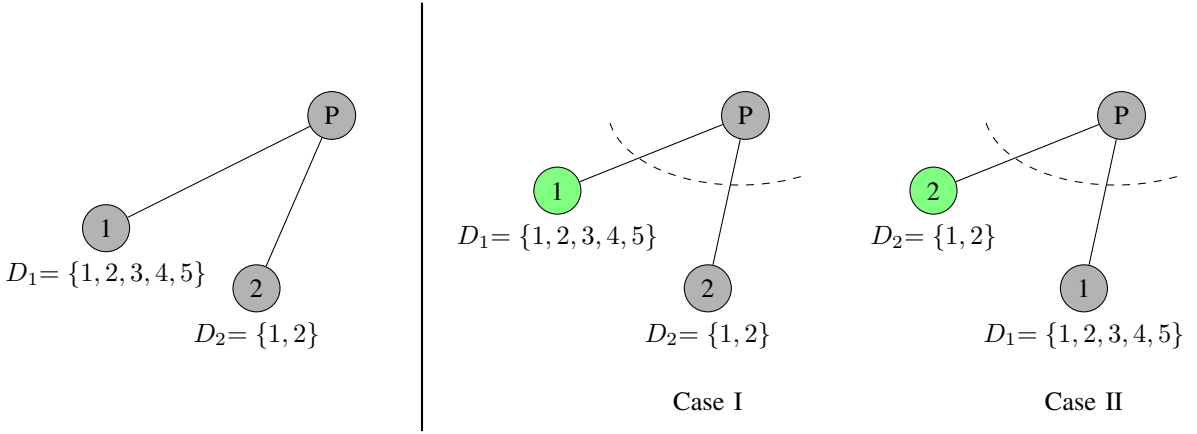
	Next, in the following, we define the relevant utility function for the learner and devices to characterize these properties in the IoT data market.
	\subsection{Utility formulations}
	The learner exploits the solution obtained from solving \eqref{eq:mse} to maximize its advantage of knowing the data types of devices for a best-suited pricing scheme. In particular, the learner aims at maximizing the following problem:
	\begin{equation}
		\sum\nolimits_{m\in\mathcal{M}} V(g(\xi_m)|D_m,a_m(t),p_m(t))-a_mp_m,
	\end{equation}
	where $V(\cdot)$ is a non-decreasing, concave valuation function evaluated at the learner knowing data types to lower down the pricing. We later discuss the details of it. On the contrary, with the given pricing, the devices intend to lower the risk of exposing their data types to the learner. An approach to address this concern while having a method that captures the concerns raised in \textbf{Q1}, \textbf{Q2} and \textbf{Q3} is the following maximization problem for each device:
	\begin{equation}
		\bigg[\sum\nolimits_{i\in\mathcal{M}\setminus m} V_i(\cdot)+a_mp_m\bigg]- \delta_m V_m(D_m,a_m(t),p_m(t)), 
		\label{eq:device_problem}
	\end{equation}
	where $V_{i\in \mathcal{M}\setminus m}(\cdot)$ is the value of added data in the data market, $\delta_m\ge0$ is the sensitivity of the data market in optimization \eqref{eq:mse} over the revealed data type for device $m$. However, solving \eqref{eq:device_problem} exerts additional communication overhead to calculate the valuation of all participating devices in the market, and therefore, is inefficient in deriving low-cost, distributed solutions to meet the research objectives. Therefore, we need to redesign the interaction scenario between devices and the learner in the data market. To this end, we develop a composite objective that stimulates the participation of the devices and brings value from the exchange of data between the devices and the learner in the data market. \rv{Our focus is to realize a trusted data market setting that brings participation of the devices with correlated data properties in a group, limiting uncontrolled competition and probable leakage of information about each other's data properties; hence, the market cannot unilaterally depress pricing.}
	
	\subsection{Multi-Agent Joint Policy (MAJP): A distributed coalition strategy}\label{subsec:majp}
	
	\rv{We start by developing the underlying distributed coalition game structure of the posed problem statement as an optimization problem. We recast the interaction between devices and the learner considering the possibility of leakage of correlation information as a multi-agent cooperative game where the payoff during coalition is allocated amongst the devices for tightening correlated information. This fundamentally means the devices with correlated information negotiate to derive a stable equilibrium solution, where both of them benefit from the data market.} Recall the data trading mechanism in Definition~\ref{def:mechanism}, this also means the coalition strategy works best for everyone's interest bringing the higher value of data, pricing, privacy and learning. Herein, we also drop the notion of time $t$ and evaluate the system for each round of interaction between the devices and the data market, which is a valid assumption to make.
	
	Let $\mathcal{M}$ denote a grand coalition and $\mathcal{S}\subseteq\mathcal{M}$ is a set of devices in coalition to protect their correlated data. In particular, a set of devices $\mathcal{S}$ agree to act as a single entity to negotiate with the value of their collective data $\mathcal{D}_\mathcal{S}$ with the platform during data trading. The value of coalition is therefore related to the pricing rate $p(\mathcal{D}_\mathcal{S})$ such that $p:P(\mathcal{M}) \times \mathcal{M} \rightarrow \mathbb{R}_{\ge 0}$, where $P(\mathcal{M})$ is the power set of $\mathcal{M}$. In what follows, we reuse the data type $\phi_m$ of device $m$, defined as a function of the importance value $\xi_m$ it allocates for the privacy of data $D_m$, as a consensus constraint on the coalition property. Consider $\mathcal{A}= a_1 \times a_2 \times \ldots \times a_m$ is the action space defining the device's joint agreement in the data trading process. Then, with reference to \eqref{eq:device_problem}, the added contribution of coalition $\mathcal{S}$ in the system can be defined by a utility function as follows.
	\begin{definition}\label{def:utility}
		For a given coalition $\mathcal{S}\subseteq\mathcal{M}$, $f_{\mathcal{S}}((\phi_j(\mathcal{S, \mathcal{A}})):\mathbb{N}\rightarrow \mathbb{R}\in[0,1]$ is a positive, concave utility function that adds return on investment for having a coalition and tightening the information about data types $\phi_j,\forall j\in \mathcal{S}$. 
	\end{definition}
	Following Definition \ref{def:utility}, we can define the coalition value instead of individual utilities as 
	\begin{align}
		v(\mathcal{S, \mathcal{A}}) &= \sum\nolimits_{j \in \mathcal{S}}a_j\Bigg[\bigg(p_j(n_j)+f_{\mathcal{S}}(\phi_j(\mathcal{S, \mathcal{A}})) \bigg) \notag\\ 
		&\quad-\bigg(p_j\cdot c(\Delta\phi_j) + c(\mathcal{S})\bigg)\Bigg],
		\label{eq:coalition_value}
	\end{align}
	where $p_j(n_j), \forall j\in\mathcal{S}$ is proportional pricing for $n_j$ data available at the devices in the coalition, $c(\Delta\phi_j)$ is the opportunity cost when leaking data type information to the learner following early trading, i.e., the learner is allowed to optimally minimize $\mathbb{E}\bigg[(\phi_m - g(\xi_m|D_m, a_m(t), p_m(t)))^2\bigg], \forall m \in \mathcal{S}$, and $c(\mathcal{S})$ is the cost of coalition defined in terms of the total power required to exchange information on correlation. More formally, we define $p_j(n_j)=p_s\bigg[\frac{n_j}{\bigcup_{j \in \mathcal{S}}n_j}\bigg]$ for a defined budget $p_s$ on the coalition $\mathcal{S}$, $c(\Delta\phi_j) = \sum\nolimits_{i \neq j}g_{ij}a_ia_j, \forall i,j \in \mathcal{S}$, where $g_{ij}$ is the normalized influence of device $i$ to $j$ due to correlation properties in the data. 
	\begin{theorem}
		For a single seller case, the optimal coalition value $v^*(\mathcal{S, \mathcal{A}})$ is proportional to the offered pricing $p^*(n)$ for trading data of samples $n$. In a multiple seller case, given a known cost of coalition $c(\mathcal{S})$, the optimal coalition value is proportional to the gain from tightening the information leakage due to data correlation and the availability of data samples itself. 
	\end{theorem}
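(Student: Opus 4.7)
The plan is to start from the explicit form of the coalition value in \eqref{eq:coalition_value} and treat the two cases as specializations of the same optimization problem $v^*(\mathcal{S},\mathcal{A}) = \max_{\mathcal{A}} v(\mathcal{S},\mathcal{A})$ over the binary action vector subject to $a_j\in\{0,1\}$. For the single-seller branch, I would set $\mathcal{S}=\{j\}$ so that (i) the opportunity-cost term collapses, since $c(\Delta\phi_j)=\sum_{i\neq j}g_{ij}a_ia_j=0$ when no other member of $\mathcal{S}$ exists; (ii) the proportional pricing reduces to $p_j(n_j)=p_s\cdot n_j/n_j=p_s$; and (iii) $f_{\mathcal{S}}(\phi_j)$ reduces to a constant in the trade decision. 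Substituting these into \eqref{eq:coalition_value} yields $v(\{j\},\mathcal{A})=a_j\bigl[p_s+f_{\{j\}}(\phi_j)-c(\{j\})\bigr]$, and the obvious maximizer is $a_j^*=\mathbf{1}\{p_s+f_{\{j\}}(\phi_j)\ge c(\{j\})\}$, giving $v^*=p^*(n)+\text{const}$. I would then argue that, because the non-pricing terms do not depend on the offered $p^*(n)$, the optimal coalition value is a strictly increasing affine, hence proportional, function of $p^*(n)$.

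For the multi-seller case, I would fix $c(\mathcal{S})$ as a known constant and rewrite \eqref{eq:coalition_value} by separating the three structural contributions:
\begin{equation}
v(\mathcal{S},\mathcal{A})=\underbrace{\sum_{j\in\mathcal{S}}a_j p_j(n_j)}_{\text{(I) data availability}}+\underbrace{\sum_{j\in\mathcal{S}}a_j f_{\mathcal{S}}(\phi_j(\mathcal{S},\mathcal{A}))}_{\text{(II) leakage-tightening gain}}-\underbrace{\sum_{j\in\mathcal{S}}a_j p_j\, c(\Delta\phi_j)}_{\text{(III) leakage opportunity cost}}-|\mathcal{S}_a|\,c(\mathcal{S}),
\end{equation}
where $|\mathcal{S}_a|=\sum_{j}a_j$. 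Then I would substitute $p_j(n_j)=p_s\bigl[n_j/\sum_{k\in\mathcal{S}}n_k\bigr]$ and $c(\Delta\phi_j)=\sum_{i\neq j}g_{ij}a_ia_j$, so that term (I) telescopes to the fixed budget $p_s$ (when all participate), and the discriminating variation in $v^*$ arises only from (II) minus (III). The maximization over $\mathcal{A}$ then explicitly trades the marginal gain $f_{\mathcal{S}}(\phi_j(\mathcal{S},\mathcal{A}))$ of suppressing correlated information against the marginal leakage loss $p_j\sum_{i\neq j}g_{ij}a_i$. I would argue, using the concavity of $f_{\mathcal{S}}$ from Definition~\ref{def:utility} together with the fact that $p_j(n_j)$ is monotone in $n_j$, that $v^*$ is a monotone increasing function of both $n_j$ (availability) and the net leakage-tightening surplus $f_{\mathcal{S}}(\phi_j)-p_j\, c(\Delta\phi_j)$, while all other terms are fixed by hypothesis; hence $v^*$ is proportional to these two quantities in the required sense.

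The main obstacle, and the place where I would spend most care, is the combinatorial nature of the maximization induced by the binary $a_j$ together with the coupling inside $c(\Delta\phi_j)$, which is bilinear in $(a_i,a_j)$ and therefore makes the objective a pseudo-Boolean quadratic. I would circumvent this by (a) relaxing $a_j\in[0,1]$ into a continuous surrogate to invoke first-order optimality, (b) showing via the sign of $\partial v/\partial a_j$ that the relaxed optimum lies at a vertex of the hypercube, and (c) using the concavity of $f_{\mathcal{S}}$ to guarantee that the relaxed maximizer coincides with the integer one. A secondary subtlety is interpreting the word \emph{proportional}: I would adopt the paper's implicit reading, namely that $v^*$ is a monotone affine function of the indicated quantity once all other ingredients are fixed, and check this directly for each case rather than attempting a literal ratio identity.
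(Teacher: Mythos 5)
Your core argument matches the paper's own (very brief) proof: for the single seller you let the rivalry terms vanish --- $c(\Delta\phi_j)=\sum_{i\neq j}g_{ij}a_ia_j$ is an empty sum and $p_j(n_j)$ collapses to the full budget --- so $v^*$ is an increasing affine function of the offered price; for multiple sellers with $c(\mathcal{S})$ fixed, maximizing \eqref{eq:coalition_value} reduces to maximizing $\sum_j a_j f_{\mathcal{S}}(\phi_j(\mathcal{S},\mathcal{A}))$ while minimizing $\sum_j a_j\,p_j\,c(\Delta\phi_j)$, which is exactly the paper's reading of ``proportional to the gain from tightening the leakage and to data availability.'' Your explicit decomposition into terms (I)--(III) and your remark that ``proportional'' can only mean ``monotone affine once everything else is fixed'' are both more careful than the paper, which silently also sets $c(\mathcal{S})=0$ in the single-seller case and leaves $f_{\mathcal{S}}(\phi_j)$ in the expression, so it too proves affinity rather than a literal ratio identity.

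The one step that would fail is item (c) of your plan for the combinatorial part: concavity of $f_{\mathcal{S}}$ does \emph{not} guarantee that the maximizer of the continuous relaxation over $[0,1]^{|\mathcal{S}|}$ sits at a vertex --- concavity pushes the maximizer toward the interior. Vertex optimality of the relaxation requires per-coordinate convexity or multilinearity of the objective in $\mathbf{a}$; the bilinear penalty $-\sum_j a_j p_j\sum_{i\neq j}g_{ij}a_i$ is multilinear and hence fine, but $\sum_j a_j f_{\mathcal{S}}(\phi_j(\mathcal{S},\mathcal{A}))$ depends on $\mathcal{A}$ through $f_{\mathcal{S}}$ in an unspecified, concave way, so your steps (b) and (c) are in tension and (c) as stated is backwards. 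This does not damage the theorem as the paper intends it --- the paper makes no attempt to solve the binary program and only reads off which terms the optimum varies with --- but if you keep the relaxation argument you should either drop the appeal to concavity or restrict to the case where $f_{\mathcal{S}}$ enters multilinearly in $\mathbf{a}$.
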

	\begin{proof}
		The first case is simple to prove. The absence of data rivalry leads to data trading with pricing signal sufficient enough for active participation of the data seller devices, leading $c(\Delta\phi_j)$ and $c(\mathcal{S})$ to zero, i.e., $a_j=1$ and $a_i=0, \forall i\in \mathcal{S}\setminus j$. Given $c(\mathcal{S})$ and a linear pricing scheme, maximizing the coalition value corresponds to minimizing the components that capture the impact of data correlation defined as $g_{i,j}, \forall i,j \in \mathcal{S}$ between sellers pair $\{i,j\}$, i.e., $p_j\cdot c(\Delta\phi_j)$; hence, the participation of devices to add value within the coalition and maximize $f_{\mathcal{S}}((\phi(\mathcal{S, \mathcal{A}}))$ with more data samples. 
	\end{proof}
	We note that \eqref{eq:coalition_value} presents a holistic outlook to the problem that connects data value, pricing, privacy and learning in the IoT data market. In what follows, if we consider a typical learning problem \eqref{eq:learning_problem} solved by the data market via data trading, it is of particular interest to realize the data value and its impact on the learning performance for a given pricing scheme, as a usual case in the data market. That also poses a feasible approach where the platform feedback the impact of parameter dissimilarity as in the opportunity cost $c(\Delta\phi_j)$ across devices due to their data properties. This can be achieved with the following definition.
	\begin{definition}\label{def:dissimilarity}
		The devices participating in the data market exhibit parameter dissimilarity $\rho_m \ge 0, \forall m \in \mathcal{M}$ in terms of gradients on the global and local loss as $||\nabla J_m(w) - \nabla J(w)|| \le \rho_m, \forall w$.
	\end{definition}
	Then, we can derive the average data dissimilarity in the coalition $\mathcal{S}$ following the Definition \ref{def:dissimilarity} as
	\begin{equation}
		\rho_\mathcal{S} = \sum\nolimits_{j \in \mathcal{S}}a_j\Bigg[\rho_j\cdot\frac{n_j}{\bigcup_j n_j}\Bigg]. \label{eq:dissimilarity}
	\end{equation}

\begin{table}[t!]
	\begin{centering}
		\begin{tabular}{|c|c|c|}
			\hline
			\textbf{Sellers} & \textbf{Case I} & \textbf{Case II} \cr
			\hline
			$\mathcal{D}_1:\{1,2,3,4,5\}$& \{1,2,3,4,5\} & \{3,4,5\} \cr
			\hline
			$\mathcal{D}_2:\{1,2\}$& - & \{1,2\} \cr
			\hline
		\end{tabular}\\
	\end{centering}
	\caption{Illustrative example on data trading where Case I indicates seller $\{1\}$ approaching the platform first, and Case II, otherwise.}
	\label{tab:trading}
\end{table}

\begin{table}[t!]
	\begin{center}
		\begin{tabular}{|l|ll|ll|}
			\hline
			\multicolumn{1}{|c|}{}                                 & \multicolumn{2}{l|}{\textbf{Coalition}}                & 
			\multicolumn{2}{l|}{\textbf{No coalition}} \\ \cline{2-5} 
			\multicolumn{1}{|c|}{\multirow{-2}{*}{\textbf{Cases}}} & \multicolumn{1}{l|}{$c(\Delta\phi_1)$}         &   $c(\Delta\phi_2)$  &  \multicolumn{1}{l|}{$c(\Delta\phi_1)$}          & $c(\Delta\phi_2)$          \\ \hline
			Case I:                                                & \multicolumn{1}{l|}{\cellcolor{blue!25}1} & -                         & \multicolumn{1}{l|}{0}         & 1         \\ \hline
			Case II:                                               & \multicolumn{1}{l|}{-}                         & \cellcolor{blue!25}1 & \multicolumn{1}{l|}{1}         & 0         \\ \hline
		\end{tabular}
	\end{center}
	\caption{Analysis of coalition strategy on linear opportunity costs $c(\Delta\phi_j), j \in \{1,2\}$ for a unit cost per sample with $\mathcal{D}_{j|j=\{1,2,3,4,5\}}$ and $\mathcal{D}_{j|j=2}=\{1,2\}$.}
	\label{tab:opp_cost}
\end{table}
	
	\noindent\textbf{Example 2.} (Illustrative analysis of coalition strategy on opportunity costs): As an appetizer, we set the availability of two devices in the trading system. Consider two sellers with data $\{\mathcal{D}_{i|i=1,2}\}$ and the buyer setups the data market for $\{\mathcal{D}_i\cup \mathcal{D}_j \subseteq \mathcal{D} | \mathcal{D}_i \cap \mathcal{D}_j\neq \emptyset\}$ data samples, as shown in Fig.~\ref{fig:experimental}. Then, if $\mathcal{D}_1$ and $\mathcal{D}_2$, respectively, possess a subset of data samples in the market, i.e., $\mathcal{D}_1=\{1,2,3,4,5\}$ and $\mathcal{D}_2=\{1,2\}$, then for a unit monetary value on the data dissimilarity $\rho_j, \forall j \in \{1,2\}$, we have two specific cases for data trading and the involved opportunity costs. Case I: Seller $\mathcal{D}_1$ trading its data first and Case II: Seller $\mathcal{D}_2$ trading its data first. In this regard, Table \ref{tab:trading} shows the trading procedure and Table \ref{tab:opp_cost} evaluates the opportunity cost under two particular scenarios: (i) when a coalition is formed and (ii) when individual trading is performed. To elaborate, let's say Seller $\mathcal{D}_1$ considers Case I and opts out of coalition to lower its opportunity cost. Then, $\mathcal{D}_2$ can switch to the early trading scenario, i.e., Case II, to lower its own opportunity cost, consequently forcing $\mathcal{D}_1$ to form a coalition with seller $\mathcal{D}_2$. Likewise, the narratives on the stability of coalition formation under Case II. This illustrative analysis establishes the motive of devices to self-organize into coalition in the data market, which we later show is stable, to alleviate the impact of data similarity between them on the offered pricing and the aftermaths of data rivalry. \qed
	
	\begin{figure}[t!]
    \centering
    \begin{tikzpicture}
      \node (p1)      at ( 1.2,0) [shape=circle,draw, fill=black!30] {P};
      \node (d1)    at ( -2,-1.5) [shape=circle,draw, fill=black!30] {1};
      \node (d2)      at ( -1,-2.3) [shape=circle,draw, fill=black!30] {2};
      
      \node (d3) at ( 1.6,-2.2) [shape=circle,draw, fill=black!30] {3};
      \node (d4) at (3.6,-2.5) [shape=circle,draw, fill=black!30] {4};
      \node (d5) at ( 3,-3.5) [shape=circle,draw, fill=black!30] {5};
      
       \draw [-] (p1) to (d1);
       \draw [-] (p1) to (d2);
        \draw [-] (p1) to (d3);
        \draw [-] (p1) to (d4);
        \draw [-] (p1) to (d5);
    
        \draw [dashed] (0,-0.2) arc [start angle=190, end angle=300, x radius=1.75cm, y radius=1cm];
        \draw [dashed] (0,-2.8) [ rotate=-32, thick, blue] arc [start angle=0, end angle=360, x radius=1.8cm, y radius=0.8cm];
        \node (type 1)      at ( -2,-2.2) {$\Phi_1$};
        
        \draw [dashed] (4,-3.5) [ rotate=-32, thick, blue] arc [start angle=0, end angle=360, x radius=1.6cm, y radius=1.2cm];  
        \node (type 2)      at ( 2,-3.3) {$\Phi_k$};
        \node (p)      at ( 0.5,-5) [shape=circle,draw, fill=black!30, label=right:$\textrm{Platform}$] {P};

        \draw[black,very thick, dotted] (0.3,-2.5) -- (0.7,-2.6);
    
\end{tikzpicture}
    \caption{\rv{An illustrative snapshot of the distributed coalition game. Each area encircled by the dotted blue signifies a coalition group including devices as its coalition members.}}
    \label{fig:bcg}
\end{figure}
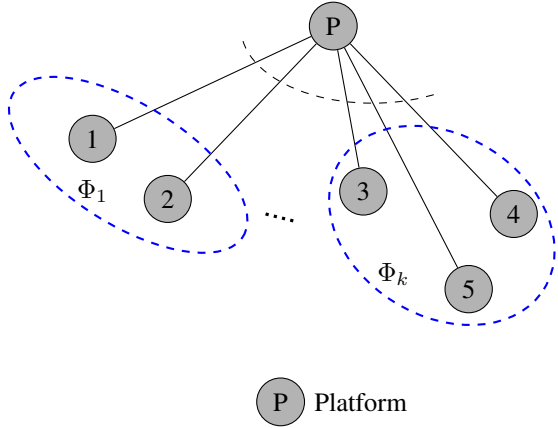
	With respect to the above analysis and its natural consequence, the data market intends to solve the following MAJP optimization problem in its general form.
	\begin{maxi!}[2]  
		{\boldsymbol{\{a_i, p_i\}_{i \in \mathcal{S}}}}                               
		{ v(\mathcal{S, \mathcal{A}})} {\label{opt:P}}{\textbf{P:}}
		\addConstraint{\sum\nolimits_{i\in \mathcal{S}}p_i = p_\mathcal{S} \label{cons1:budget}},
		\addConstraint{V_{i \in \mathcal{S}}(D_i, a_i, p_i|\phi_i) \ge 0, \forall i \in \mathcal{S}, \label{cons2:positive_value}} 	
		\addConstraint{V_{i \in \mathcal{S}}(D_i, a_i, p_i|\phi_i)\ge \notag\\
			&&&V_{i \in \mathcal{S}'}(D_i, a_i, p_i|\phi_i), \mathcal{S}' \in \mathcal{S}\setminus i,\label{cons3:valuation}} 	
		\addConstraint{a_i \in \{0,1\}, \forall i \in \mathcal{S}, \label{cons:participation}}
		\addConstraint{\rho_\mathcal{S} \le \rho, \label{cons:dissimilarlity}}
		\addConstraint{c(\mathcal{S}) \; \textrm{is bounded,} \label{cons:coalition_cost}} 
		\addConstraint{c(\Delta\phi_i)\le \phi_i^{\textrm{th}}, \forall i \in \mathcal{S}, \label{cons:opportunity_cost}} 
	\end{maxi!}
	where \eqref{cons1:budget} is the budget constraint available to be distributed amongst the members in the coalition; constraints \eqref{cons2:positive_value} and \eqref{cons3:valuation} jointly capture the positive valuation in participation; \eqref{cons:participation} defines the participation strategy of the devices, \eqref{cons:dissimilarlity} quantifies the measure of average data dissimilarity such the coalition is stable; \eqref{cons:coalition_cost} is the accepted tolerance on the cost of the coalition, and \eqref{cons:opportunity_cost} is the bound on individual opportunity cost of the members in the coalition. We notice, the optimization problem \textbf{P} is hard to solve and mostly intractable due to (i) binary constraints, (ii) the stability of the mechanism due to the coupling in data types and valuation for the unknown heterogeneity in data distribution, for a large number of devices, and (iii) private cost information. To address the technical challenges defined for solving \textbf{P}, in the following, we recast the market design so as to offer a novel cooperation protocol that mitigates the tension between data sellers and the data market using the pricing signal for the exchange of value and data. In particular, we characterize a subset of the coalition formed accordingly to the data types and then derive an association and pricing scheme based on the properties of devices associated with the particular coalition. Note that the obtained solution is sub-optimal but a low-complexity alternative to address the aforementioned challenges. 
	
	\section{Market Design: MAJP as a Coalition Game}\label{sec: market design}
	In the best interest of the devices (``agents"), we observe, as illustrated in the experimental (subsection~\ref{subsec:majp}), devices having \textit{similar data types form coalition} to maximize their utility and reach stability in the data market offering transferable utility (TU). Hence, we formalize the elegant framework of distributed coalition games, the \emph{Hedonic game} \cite{bogomolnaia2002stability, apt2009generic} to solve the problem of distributed coalition with the objective of maximizing \textbf{P}. In fact, it is intuitive that the devices have individual preferences to form coalition groups with similar data types, which is a common concept for coalition-based games \cite{apt2009generic}. This captures two necessary conditions to design the Hedonic game: (i) the payoff of the devices is defined only based on the other members in the coalition, and (ii) the coalition structure is the direct consequence of the preference profiles of each device in the coalition.
	\begin{Remark}
		The interaction between a single learner and a finite set of devices (sellers) in coalition upon the pricing signal and the value of data exchange together form a coalition game to protect their correlated data. Formally, the game is characterized to capture the coalition strategies of devices and the update in pricing signals of the learner towards maximization of the coalition  during participation in the data market. 
	\end{Remark}
	
	\begin{definition}\label{def:coalition}
		A coalition partition is defined as the set $\Pi = \{\mathcal{S}_1, \mathcal{S}_2,\ldots, \mathcal{S}_K\}$ dividing the total set of devices $\mathcal{M}$ in the system such that $\mathcal{S}_k\subseteq\mathcal{M}$ and $\cup_{k=1}^K\mathcal{S}_k=\mathcal{M}$, where $\mathcal{S}_k$ are the coalitions sets based on device type $k\in\{1,2,\ldots,K\}$. Then, we have the following preference definition for the device participating in the coalition \cite{bogomolnaia2002stability}. 
	\end{definition}
	
	\begin{algorithm}[t!]
		\caption{\strut MAJP Solution with Coalition Formation}
		\label{alg:game}
		\begin{algorithmic}[1]
			\STATE{\textbf{Initialization:} Partition $\Pi_{\textrm{initial}}$ with devices in set $\mathcal{M}$ having a total of $D$ data samples at $t\in\mathcal{T}$.}
			\STATE{\textbf{Output:} Stable partition $\Pi_{\textrm{final}}$, participation vector $\textbf{a} = \{a_1, a_2,\ldots,a_m\},\forall m \in \mathcal{M}$, pricing signal $\textbf{p}=\{p_{\mathcal{S}_1},p_{\mathcal{S}_1},\ldots,p_{\mathcal{S}_k}\}$, $\mathcal{S}_k \in \Pi_{\textrm{final}}$}.
			\STATE{\textbf{Phase I}: \\Private discovery of device types;\\
				Execute Algorithm~\ref{alg:mpc};}
			\STATE{\textbf{Phase II}:\\ 
				Distributed coalition formation;}\\
			\REPEAT
			\FORALL{device $m\in\Pi_{\textrm{initial}}$}
			\STATE Randomly select two coalitions;
			\STATE Evaluate the preference function for the given coalition with \eqref{eq:pref_func} and preference profile (Definition 4);
			\STATE Invoke switch operations between coalition groups, comparing possible payoffs;
			\STATE Add device to the observed coalition sets in $\Pi_{\textrm{initial}}$;
			\STATE Repeat evaluation of preferences on different coalition groups (line 6) until no further switch operations exist;
			\ENDFOR
			\UNTIL{$\Pi_{\textrm{final}}$ is reached;}
			\STATE{\textbf{Phase III}: Computation of coalition value using \eqref{eq:coalition_value};}
			\STATE{Evaluate the final pricing signal $\textbf{p}$;}
		\end{algorithmic}
		\label{Algorithm}
	\end{algorithm}
	\begin{definition}\label{def:preference}
		The preference profile of any device $m$ is defined by the relation or an order $\succeq_m$ that is a complete, reflexive, and transitive binary relation over the set $\{\mathcal{S}_k\subseteq\mathcal{M}: m \in \mathcal{S}_k\}$.
	\end{definition}
	
	Following Definition~\ref{def:preference}, we have for any pair of coalition sets $\mathcal{S}_1$ and $\mathcal{S}_2$, $\mathcal{S}_1 \succeq_m \mathcal{S}_2$ means device $m \in \mathcal{M}$ prefers coalition $\mathcal{S}_1$ (or least equally prefers both), than $\mathcal{S}_2$. 
	
	Then, formally, with the given set of devices $\mathcal{M}$ as players and their preference profiles $\succeq_m, \forall m \in \mathcal{M}$, a Hedonic coalition game can be defined as follows.
	\begin{definition}
		A Hedonic game is defined by the pair of set of players (i.e., the devices) and their preference profiles $(\mathcal{M}, \succ)$.
	\end{definition}
	Once the coalition $\mathcal{S}_k$, for illustration, as shown in Fig.~\ref{fig:bcg}, with TU \eqref{eq:coalition_value} is agreed upon by the devices, the coalition utility can be divided amongst the devices as the payments, quantified in the form of contract. In our formulation, we define the value of coalition as the coupling between the obtained overall revenue due to participation in the data market and the consequence of limited information leakage due to correlation properties amongst the device's data. Particularly, the payment under \textit{contracts} for device $m \in \mathcal{M}$ is defined as $p_m(\mathcal{S}_k)=p_{\mathcal{S}_k}\bigg[\frac{n_m}{\bigcup_{m \in \mathcal{S}_k}n_m}\bigg]$ for a obtained revenue $p_{\mathcal{S}_k}$ due to coalition, where $\sum_{\mathcal{S}_k \in \Pi} p_{\mathcal{S}_k} = p_\mathcal{S}$ on the coalition $\mathcal{S}_k$, using the earlier defined concepts.
	
	Next, following Definition~\ref{def:preference}, we evaluate the preference profile of the devices as follows. Let's define $\mathcal{S}_{\Pi}(m)$ is the coalition set where the device $m$ should belong following its type, i.e., $m \in \mathcal{S}_k$ such that $\mathcal{S}_{\Pi}(m) = \mathcal{S}_k \in \Pi $. Then, as explained in Section~\ref{sec:preliminaries}, the preference of devices is defined as $\mathcal{S}_1 \succeq_m \mathcal{S}_2 \Leftrightarrow u_m(\mathcal{S}_1) \succeq_m u_m(\mathcal{S}_2), \forall m \in \mathcal{M}$, where $u_m: 2^{\mathcal{K}}\rightarrow \mathbb{R}$ is the preference function of any device $m$ such that
	\begin{equation}
		u_m(\mathcal{S}) =
		\begin{cases}
			p_m(\mathcal{S}), \; \; \text{if $m \in \mathcal{S}_{\Pi}(m)$},\\
			0, \; \; \text{otherwise}.
		\end{cases}
		\label{eq:pref_func}
	\end{equation}
	
	In doing so, the devices verify the conditions of bound on their individual opportunity costs and the measure of parameter dissimilarity, as defined in Definition \ref{def:dissimilarity}.
	Then, it is quite straightforward to have the preference function as \eqref{eq:pref_func}. The devices in coalition get benefit from the TU obtained following the contracts mechanism, as discussed before, where two specific economic properties (or conditions) are satisfied: (i) \textit{Individual Rationality (IR)}, a condition that captures the motive behind devices undergoing distributed coalition with a positive return on investment, (ii) \textit{Incentive Compatibility (IC)}, a condition that ensures devices get to maximize their utilities, as in the form of obtained payments, if they act as per their type.
	
	\begin{figure}[t!]
		\centering
		\includegraphics[width=\linewidth]{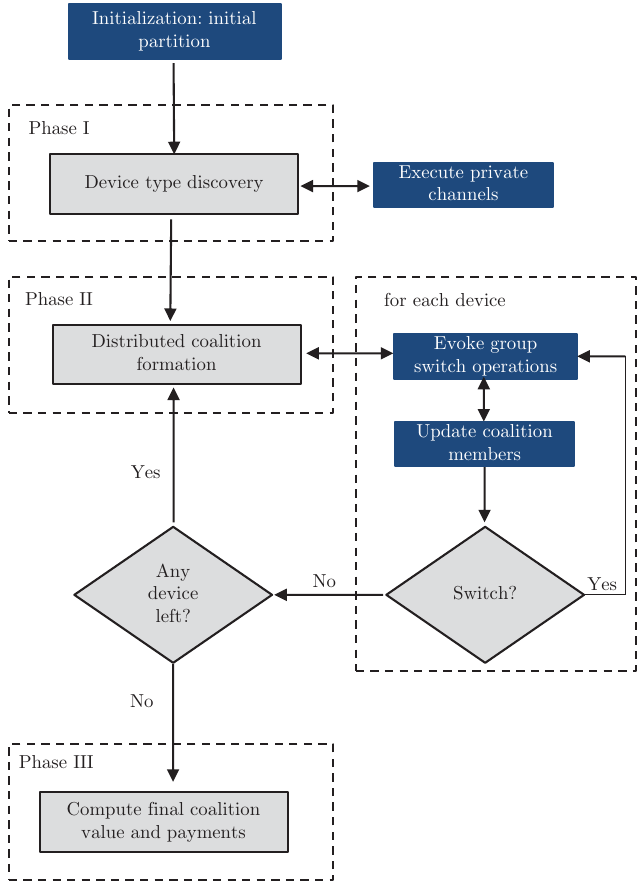}
		\caption{Execution flowchart of MAJP solution.}
		\label{fig:flowchart}
	\end{figure}
	In Algorithm~\ref{alg:game}, we develop a solution approach to the optimization problem \textbf{P}, where the objective is to derive coalition partitions following device types that uniquely maximize the overall coalition value. In particular, we adopt the modified merge-and-split algorithm \cite{apt2006stable, apt2009generic, saad2009coalitional}, which works in an iterative manner on two coalitions at a time, and design an intuitive solution that reaches stable partitions. To that end, the following remark is useful.
	\begin{Remark}
		While satisfying IR and IC constraints, we show the developed MAJP framework offering the distributed coalition solution with payment under contracts to realize an IoT data market for model training yields the following desirable properties.
	\end{Remark}
	
	\begin{enumerate}[label=(\roman*)]
		\item Budget balance -- Following Definition~\ref{def:preference}, for any device $m \in \mathcal{S}_k$, the sharing of total per group budget allocation $p_\mathcal{S}$ between the coalition members is proportional to the value of their participation such that $\sum \nolimits_{\mathcal{S}_k \in \Pi}p_{\mathcal{S}_k} = p_\mathcal{S}$ is satisfied.
		\item Linearity (within the coalition) -- The linearity property, basically implying the revenue allocation for the exchange of data $\mathcal{D}_1$ and $\mathcal{D}_2$ from devices $\{1\}$ and $\{2\}$, respectively, is the same as any one device trading $\mathcal{D}_1 \cup \mathcal{D}_2$, comes as a direct consequence of proportional payoff within the group. 
		\item Truthfulness -- The mechanism is truthful such that there exists a non-negative payoff proportional to the data contribution, as defined in the coalition \emph{contracts}, only if the device form coalition as per its data type and the linearity. It readily follows the definition of the preference profile of any device in \eqref{eq:pref_func} and the designed corresponding individual utility function, which penalizes untruthful reporting.    
		\item Symmetry -- The output of the mechanism is invariant to any permutation of participation during coalition formation. The proof follows for a finite set of devices, given a mechanism satisfying properties (ii) and (iii). 
	\end{enumerate}
	
	(We rely on the above proof sketch and omit further details on analytical proof to support Remark 2, as it is straightforward with the given explanation.)
	
	Algorithm~\ref{alg:game} operates in three phases, as shown in Fig.~\ref{fig:flowchart}; \textbf{Phase I:} device type discovery, \textbf{Phase II:} distributed coalition formation, and \textbf{Phase III:} computation of final coalition value and payments. In Phase I, the devices exploit a secured private channel to compare their type with available proxy device type sets. To allow its successful execution, we assume devices can ``ping" each other through a broadcast network or use beacons during the discovery phase to evaluate individual devices types, similar to \cite{clarke2010private}. In this regard, the cost of the coalition can be explicitly defined as units per round of communication. In Phase II, we employ the execution of split-and-merge algorithm \cite{apt2009generic}, where the devices are allowed to make a switch between two coalition groups at a time, following their preference profile on the device types and utility functions, that maximizes their payoff. Following Definition \ref{def:dissimilarity}, the bound on individual opportunity costs, and the amount of information leakage, the devices perform coalition switch operations as per their preference profile. This iterative procedure eventually leads to a stable coalition structure, as proven in \cite{apt2009generic,saad2009coalitional}. Finally, in Phase III, once the stable partition $\Pi_{\textrm{final}}$ is reached, the coalition value is calculated further to define the final pricing signal for the data trading.
	
	\begin{algorithm}[t!]
		\caption{\strut Private Discovery of Device Types}
		\label{alg:mpc}
		\begin{algorithmic}[1]
			\STATE{Begin with initial partition $\Pi_{\textrm{initial}}$ of $K$ with proxy data type sets in $\mathcal{K}$}
			\STATE{Permute devices within $\mathcal{S}_k, k\in \{1,2,\ldots, K\}$ over available private channel to evaluate device data type.}
			\STATE{Return evaluated device types to individual devices $m\in\mathcal{M}$.}
		\end{algorithmic}
	\end{algorithm}
	
	\subsection{Optimality and convergence analysis}
	The solution derived following Algorithm~\ref{alg:game} is low-complexity, sub-optimal in solving \textbf{P}. Therein, we first analyze the convergence of the algorithm to a stable coalition. For that, we state the following theorem based on the concept of weak Pareto optimal \cite{jorswieck2011stable} leveraged from the stability in the matching problem.
	\begin{theorem}(Informal) \label{theo:convergence}
		Algorithm~\ref{alg:game} converges to a local maximal value following the definition of weak Pareto optimality and the existence of a stable coalition.
	\end{theorem}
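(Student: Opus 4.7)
The plan is a standard monotonicity-plus-finiteness argument adapted to the hedonic setting, followed by identifying the terminal partition with a weakly Pareto optimal local maximum. First, I would observe that the set of partitions of the finite player set $\mathcal{M}$ is itself finite, of cardinality bounded by the Bell number $B_{|\mathcal{M}|}$, so any trajectory that never revisits a state must terminate. Starting from $\Pi_{\text{initial}}$, Algorithm~\ref{alg:game} produces a sequence $\Pi^{(0)},\Pi^{(1)},\ldots$ via pairwise switch operations, and a switch of device $m$ from $\mathcal{S}_1$ to $\mathcal{S}_2$ is executed only when the preference function $u_m$ in \eqref{eq:pref_func} strictly improves while the feasibility constraints \eqref{cons2:positive_value}--\eqref{cons:opportunity_cost} remain satisfied. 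By construction, no switch can return the trajectory to a previously visited state, and the algorithm halts in at most $B_{|\mathcal{M}|}$ iterations at some partition $\Pi_{\text{final}}$.

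Next, to upgrade termination to local maximality, I would introduce a global Lyapunov quantity $\Phi(\Pi) := \sum_{\mathcal{S}\in\Pi} v(\mathcal{S},\mathcal{A})$ using \eqref{eq:coalition_value}, and verify that every admissible switch strictly increases $\Phi$. The key ingredients are the proportional contract $p_m(\mathcal{S}_k) = p_{\mathcal{S}_k}\,n_m/\sum_{m'\in \mathcal{S}_k} n_{m'}$, the budget-balance identity $\sum_{\mathcal{S}_k\in\Pi}p_{\mathcal{S}_k}=p_{\mathcal{S}}$, and the concavity of $f_{\mathcal{S}}(\cdot)$ from Definition~\ref{def:utility}, which together imply that an agent's improvement in $u_m$ passes through to the joint payoff of the two affected coalitions under the dissimilarity cap $\rho_{\mathcal{S}}\le\rho$. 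Since $\Phi$ is upper bounded by $p_{\mathcal{S}}$ minus the non-negative opportunity and coalition costs, monotone increase forces termination at a partition $\Pi_{\text{final}}$ where no admissible single-agent switch remains.

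Finally, I would argue that $\Pi_{\text{final}}$ is Nash-stable and, by the merge-and-split stability arguments of \cite{apt2009generic,saad2009coalitional} transposed to the hedonic game $(\mathcal{M},\succ)$, weakly Pareto optimal: any alternative partition that strictly Pareto-dominates $\Pi_{\text{final}}$ would necessarily permit a profitable unilateral switch by some device, contradicting the absence of further admissible switches. Thus $\Pi_{\text{final}}$ is a local maximum of $\Phi$ in the switch-neighborhood topology and, equivalently, a weakly PO outcome of the underlying hedonic game.

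The main obstacle will be showing that the individual-switch monotonicity of $u_m$ actually lifts to monotonicity of the global potential $\Phi$ once the network-externality terms $g_{ij}$ in $c(\Delta\phi_j)$ and the coalition cost $c(\mathcal{S})$ are included, since a single switch perturbs two coalitions simultaneously and the externality $g_{ij}a_ia_j$ can depress the residual coalition's value even as the destination's value rises. My strategy to contain this is to restrict admissibility to switches that are jointly improving for both affected coalitions under the contract and the constraints \eqref{cons:dissimilarlity}--\eqref{cons:opportunity_cost}, matching the algorithm's feasibility check, which keeps $\Phi$ a valid Lyapunov function along the trajectory and yields the claimed convergence to a locally maximal, weakly PO stable partition.
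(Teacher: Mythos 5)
Your proposal is correct and follows essentially the same route as the paper: the paper's proof likewise argues that each accepted switch operation yields a non-decreasing coalition value $v(\mathcal{S}_{\Pi_{(t)}},\mathcal{A}^{(t)})\ge v(\mathcal{S}_{\Pi_{(t-1)}},\mathcal{A}^{(t-1)})$ and concludes convergence to a local maximum from monotonicity plus boundedness, exactly your Lyapunov-plus-finiteness argument. You are in fact more explicit than the paper about the one real subtlety --- that individual improvement in $u_m$ must be shown to lift to the partition-level value in the presence of the externality terms $g_{ij}$ --- which the paper's proof asserts without comment.
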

	\input{figinfoleak}
	\begin{proof}
		We note that the objective $v(\mathcal{S, \mathcal{A}})$ captures the overall utility with the obtained coalition. Then, at each iteration, say $t$, indicated for the execution of lines 6 -- 12, we have an outcome as the transient partition set $\Pi_{(t)}$ that provides associations of devices to a particular coalition. Then, each $t$-operation gives a guarantee in the relation 
		\begin{equation}
			v(\mathcal{S}_{\Pi_{(t)}}, \mathcal{A}^t) \ge v(\mathcal{S}_{\Pi_{(t-1)}}, \mathcal{A}^{(t-1)})
		\end{equation}
		following the accept/reject operation of participation variable $a_i, \forall i \in \mathcal{S}$ for the individual utility maximization. We observe the maximization objective of \textbf{P} is captured using a non-decreasing function under the participation decision as accept/reject operation of Algorithm~\ref{alg:game}. Then, the MAJP algorithm converges to a local maximal point of the problem \textbf{P}.
	\end{proof}
	Next, we have a quick walk-through example to reflect the operation of Algorithm~\ref{alg:game} on optimality and convergence following Theorem~\ref{theo:convergence}. First, define a coalition game $(v, \mathcal{M})$, where $\mathcal{M}\in \{1,2,3,4\}$ following a known optimal device to device's data type partition mapping rule $\mathcal{A}^*$ for a subset of partition $\mathcal{S}_k, k \in \{1,2,\ldots, K\}$ as
	\[ \mathcal{A}^* :=
	\bordermatrix{ & \mathcal{S}_1 & \mathcal{S}_2 & \mathcal{S}_3 \cr
		\{1\} & 1 & 0 & 0 \cr
		\{2\} & 0 & 1 & 0 \cr
		\{3\} & 1 & 0 & 0 \cr
		\{4\} & 0 & 0 & 0 }, \qquad
	\]
	Following which, let us assume having an oracle defining corresponding $v$ as
	\begin{equation}
		v(\mathcal{S}, \mathcal{A}^*) :=
		\begin{cases}
			1, \; \; \text{if $\mathcal{S} = \{1,2\}$},\\
			2, \; \; \text{if $\mathcal{S} = \{1,3\}$},\\
			0, \; \; \text{otherwise}.
		\end{cases}
		\label{eq:pref_func}
	\end{equation}
	Then, we begin with an initial partition as $\{1\},\{2\},\{3\},\{4\}$. Following Algorithm~\ref{alg:game}, we have a switch operation between coalition groups, resulting $\{\{1, 2\}, \{3\},\{4\}\}$ or $\{\{1,3\}, \{2\},\{4\}\}$. This leads to obtaining an overall utility value of $1$ and $2$, respectively, with a unique coalition group offering a higher utility and no further switch operation. Furthermore, we observe departure from the related coalition group only lowers the individual payoff for the participating device within the group - otherwise, deviation from the local optimal in each execution of the merge and split rule. 
	
	The problem of such a class, however, is mathematically intractable to derive an analytical optimality gap \cite{bogomolnaia2002stability, shehory1997multi, saad2009coalitional}. Hence, following the above analysis, we observe the sequence of $\{ v(\mathcal{S}_{\Pi_{(t-1)}}, \mathcal{A}^{t-1}), v(\mathcal{S}_{\Pi_{(t)}}, \mathcal{A}^{(t)}), v(\mathcal{S}_{\Pi_{(t+1)}}, \mathcal{A}^{(t+1)}), \ldots, \}$ has a ratio bound $\alpha = \frac{v(\mathcal{S}, \mathcal{A})}{v(\mathcal{S}^*, \mathcal{A}^*)}$ that grows logarithmically with the size of the coalitions, i.e.,
	\begin{equation}
		\alpha \le \sum\nolimits_{i=1}^{\max(|\mathcal{S}|)}\frac{1}{i},
	\end{equation}
	where the bound is derived accordingly to the concept of set cover theory, similar to \cite{shehory1996formation, shehory1997multi}, for the value at optimal coalition configuration $v(\mathcal{S}^*, \mathcal{A}^*)$ in a multi-agent coordination setup through coalition formation. To that end, in the following, the complexity analysis of the proposed algorithm is discussed.
	\subsection{Complexity analysis}  
	The complexity analysis of the proposed approach is done following the ``propose and swap"  method of the stable matching algorithm with externalities \cite{roth1992two}. Particularly, the impact of preference profiles with the number of devices and their interactions during the distributed coalition formation with the merge-and-split algorithm is an instance of the propose and swap method, where devices opt to join the coalition based on their type so as to maximize formulated individual utility-leading to stability. We begin with two randomly selected coalitions; hence, $\binom{M}{2}$ defines the number of possible switches between coalition groups and $M \times \Pi_{\textrm{initial}}$ number of possible options to split and merge as a member of coalition group. Following which, this leads to a sub-linear complexity $\mathcal{O}( M\Pi_{\textrm{final}}\log(M \times\Pi_{\textrm{initial}}))$ with the number of devices and coalitions formed, similar to the analysis done in \cite{roth1992two}. In the following, we provide simulation results to evaluate the performance analysis of Algorithm \ref{alg:game}. 
	\section{Numerical Results}\label{sec:simulations}
	\rv{In this section, we evaluate, compare, and validate the performance of the proposed market model with intuitive baselines. To begin with, first, we conduct statistical analysis to measure the information leakage due to data correlation in its simplified version. For this, we consider a few seller devices available in the market generating explanatory data samples for trading, with the quality of data defined for the model's performance at the learner, as in \cite{zhan2020learning}. Second, we measure the impact of information leakage on data valuation and, with experimental, show the impact of data rivalry on value depression. \rrv{Finally, we conduct numerical evaluations of the proposed solution using a simple Python \cite{van2014python} simulator.} Particularly, we follow a linear pricing scheme to compare the performance of our proposed approach in the IoT data market.}	

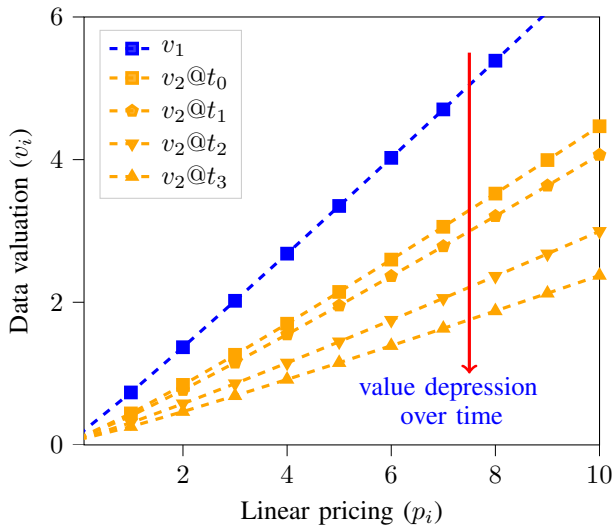
\begin{figure}
    \centering
    \begin{tikzpicture}

    \definecolor{darkgray176}{RGB}{176,176,176}
    \definecolor{lightgray204}{RGB}{204,204,204}
    \definecolor{orange}{RGB}{255,165,0}
    \definecolor{steelblue31119180}{RGB}{31,119,180}
    
    \begin{axis}[
    legend cell align={left},
    legend style={
      fill opacity=0.8,
      draw opacity=1,
      text opacity=1,
      at={(0.03,0.97)},
      anchor=north west,
      draw=lightgray204
    },
    tick align=outside,
    tick pos=left,
    x grid style={darkgray176},
    xlabel={Linear pricing ($p_i$)},
    xmin=0.1, xmax=10,
    xtick style={color=black},
    y grid style={darkgray176},
    ylabel={Data valuation ($v_i$)},
    ymin=0, ymax=6,
    ytick style={color=black}
    ]
    \addplot [very thick, blue, dashed, mark=square*, mark size=2, mark options={solid}]
    table {%
    0 0.142931959013161
    1 0.73545004143538
    2 1.36963765942085
    3 2.02049727390927
    4 2.68172368287484
    5 3.35039932353803
    6 4.02485039079282
    7 4.70399619523011
    8 5.38708400749361
    9 6.07356092743565
    10 6.76300452279839
    11 7.45508207943971
    12 8.14952512736287
    13 8.84611271902665
    14 9.54466001068255
    15 10.2450102063558
    16 10.9470287161596
    17 11.6505988200322
    18 12.3556183832809
    19 13.061997324597
    };
    \addlegendentry{$v_1$}
    \addplot [very thick, orange, dashed, mark=square*, mark size=2, mark options={solid}]
    table {%
    0 0.0931817450301187
    1 0.441534799419804
    2 0.841565250982612
    3 1.2633935957762
    4 1.69904036738198
    5 2.14473763361736
    6 2.59829606922352
    7 3.05828897016893
    8 3.52371532963953
    9 3.99383546157928
    10 4.4680813454031
    11 4.94600362528921
    12 5.42723830235922
    13 5.91148477013806
    14 6.39849075482073
    15 6.88804165182233
    16 7.37995276795111
    17 7.87406354554263
    18 8.37023317554626
    19 8.86833720704793
    };
    \addlegendentry{$v_2@t_0$}
    \addplot [very thick, orange, dashed, mark=pentagon*, mark size=2, mark options={solid}]
    table {%
    0 0.0892561318455625
    1 0.405940726745276
    2 0.769604773620556
    3 1.15308508706928
    4 1.54912760671089
    5 1.95430693965215
    6 2.3666327902032
    7 2.78480815469903
    8 3.20792302694502
    9 3.63530496507207
    10 4.06643758673009
    11 4.50091238662655
    12 4.9383984566902
    13 5.37862251830733
    14 5.82135523165521
    15 6.26640150165666
    16 6.7135934254101
    17 7.16278504140239
    18 7.61384834140569
    19 8.06667018822539
    };
    \addlegendentry{$v_2@$$t_1$}
    \addplot [very thick, orange, dashed, mark=triangle*, mark size=2, mark options={solid,rotate=180}]
    table {%
    0 0.0787878300200792
    1 0.311023199613202
    2 0.577710167321741
    3 0.858929063850803
    4 1.14936024492132
    5 1.44649175574491
    6 1.74886404614901
    7 2.05552598011262
    8 2.36581021975968
    9 2.67922364105285
    10 2.99538756360206
    11 3.31400241685947
    12 3.63482553490614
    13 3.95765651342537
    14 4.28232716988049
    15 4.60869443454822
    16 4.93663517863407
    17 5.26604236369509
    18 5.59682211703084
    19 5.92889147136529
    };
    \addlegendentry{$v_2@$$t_2$}
    \addplot [very thick, orange, dashed, mark=triangle*, mark size=2, mark options={solid}]
    table {%
    0 0.0727272342263783
    1 0.256070947063054
    2 0.466613289990848
    3 0.688628208303265
    4 0.917915982832619
    5 1.15249349137756
    6 1.39120845748606
    7 1.63330998429944
    8 1.87827122612607
    9 2.12570287451541
    10 2.37530597126479
    11 2.62684401331011
    12 2.88012542229432
    13 3.13499198428319
    14 3.39131092358986
    15 3.6489692904328
    16 3.907869877869
    17 4.16792818186454
    18 4.42907009239277
    19 4.6912301089726
    };
    \addlegendentry{$v_2@$$t_3$}
    \addplot [very thick, steelblue31119180, forget plot]
    table {%
    7.2 1
    };
    \draw (axis cs:5.2,0.7) node[
      anchor=base west,
      text=blue,
      rotate=0.0
    ]{value depression};
    \draw (axis cs:6,0.3) node[
      anchor=base west,
      text=blue,
      rotate=0.0
    ]{over time};
    \draw[->,draw=red, very thick] (axis cs:7.5,5.5) -- (axis cs:7.5,1);
    \end{axis}
    
    \end{tikzpicture}

    \caption{\rv{Evaluation of value depression with the offered linear pricing for $i \in \{1, 2\}$}.}
    \label{fig:datavaluation}
\end{figure}
	\subsection{Analysis on data correlation}
	In this subsection, we conduct the statistical analysis on information leakage for a three seller $\mathcal{M} = \{1, 2, 3\}$ and one buyer scenario in the developed IoT data market. For simplicity, we consider each device's data provides an equal marginal contribution to the learner, and the data samples are revealed in a sequence. As we talked about the FL setting, the contribution of data samples is, in fact, reflected in the gradient (or parametric) response provided by the individual sellers. Based on this analysis, we show the impact of data correlation on information leakage. Note that the absence of this simplification won't alter the result, as the framework well-captures individual contribution of data samples in improving the learner's model following \eqref{eq:type}. We associate each device with a random variable and generate a synthetic dataset using Gaussian process priors; the objective is limited to finding the joint probability distribution, and the data type is quantified following a uniform $k$-level quantization on the model performance, i.e., in the order of contribution in the model improvement, as defined in \eqref{eq:type}. The definition of average data similarity in \eqref{eq:dissimilarity} is used to perform such quantization equivalently, satisfying the constraint in \textbf{P}. For instance, using a three sellers setup represented with random variables (RVs) $X$, $Y$, and $Z$, respectively, we model $Z = 0.5X+Y$, with X$\sim \mathcal{N}(\mu_X, \sigma_X)$ and Y$\sim \mathcal{N}(\mu_Y, \sigma_Y)$, where the pairs $(\mu_X, \sigma_X)$ and $(\mu_Y, \sigma_Y)$ defines mean and standard deviation of the corresponding RV. Then, as shown in Fig.~\ref{fig:infoleak}, the buyer learns data correlation $r_{i,j}, \forall i,j \in \mathcal{M}$ and reaches convergence while obtaining device data type information with a precision of $5 \times 10^{-4}$. In practice, the sellers can use a batch of data samples; hence, the time step required is much less than the approach where each sample is revealed. With this, we next evaluate the impact of information leakage on data valuation.  
	\subsection{Value depression with information leakage}
	\rv{We use the valuation function following linear pricing models defined as $V = 1/(1+g)\bigg[\sum\nolimits_{i=1}^{M}\bigg(\frac{D_i}{\sum\nolimits_j D_j}\bigg)^{1-bp}\bigg]^{1/b}$ for each device, with information leakage factor $0\le g \le 1$ and a positive weight factor $b>0$ capturing the characteristics of valuation function. For the IoT data market setup, we use the statistical model defined in Section~\ref{sec:simulations}-A. We set $b=0.1$ and the range of offered pricing in $[1,10]$ monetary units. We then first evaluate the influence of the correlated data of seller $\{2\}$ on the data valuation of seller $\{1\}$ without the execution of the proposed coalition solution approach. Fig.~\ref{fig:datavaluation} shows the data valuation is proportional to the number of data samples, which is intuitive, and also to the offered pricing signal; however, it drops significantly as the learner identifies data properties between the sellers. Interestingly, as shown in Fig.~\ref{fig:datavaluation}, the price depression is prominent for the seller $\{2\}$, when seller $\{1\}$ is given the competitive advantage of arrival in the market. This is obvious given the characterization of data properties between $\{2\}$ and $\{1\}$ in Fig.~\ref{fig:infoleak} in quantifying their data type. In principle, the learner is devaluating the data for the seller $\{2\}$ due to information on its statistical properties by the traded data of the seller$\{1\}$. Next, we evaluate individual utilities of the devices following our proposed MAJP solution approach and provide further analysis of the developed coalition strategy. With the MAJP solution approach, the devices with correlated data samples undergo coalition to interact with the learner while limiting information leakage, avoiding data rivalry in participation, and consequently, data value depression. }
	
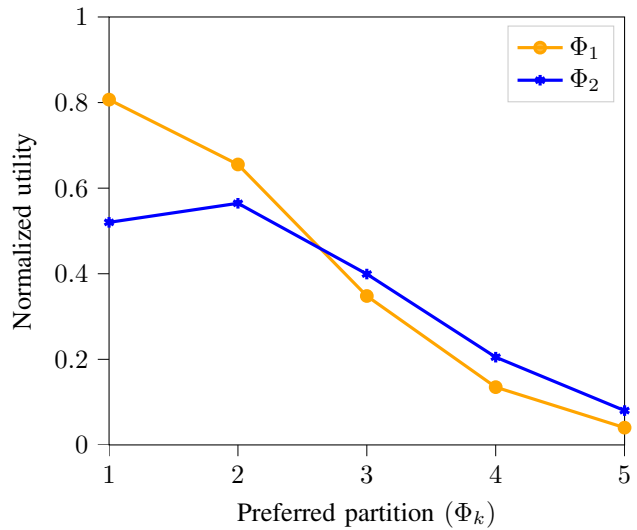
\begin{figure}
    \centering
    \begin{tikzpicture}

\definecolor{darkgray176}{RGB}{176,176,176}
\definecolor{lightgray204}{RGB}{204,204,204}
\definecolor{orange}{RGB}{255,165,0}

\begin{axis}[
legend cell align={left},
legend style={fill opacity=0.8, draw opacity=1, text opacity=1, draw=lightgray204},
tick align=outside,
tick pos=left,
x grid style={darkgray176},
xlabel={Preferred partition\(\displaystyle \ (\Phi_k)\)},
xmin=1, xmax=5,
xtick style={color=black},
y grid style={darkgray176},
ylabel={Normalized utility},
ymin=0, ymax=1,
ytick style={color=black}
]
\addplot [very thick, orange, mark=*, mark size=2, mark options={solid}]
table {%
0 0.489323590486531
1 0.806868212431536
2 0.655351198778265
3 0.348015402318706
4 0.135002623993252
5 0.0403478060332176
6 0.00947959769992077
7 0.00172277863784651
8 0.000217429349861534
9 7.49581760932939e-06
10 -5.49490248344926e-06
11 -2.08150752414138e-06
12 -4.99326155173858e-07
13 -9.64494986432398e-08
14 -1.61004237288863e-08
15 -2.39954066690391e-09
16 -3.25295028337724e-10
17 -4.06067564158626e-11
18 -4.70863641853561e-12
19 -5.105805895941e-13
20 -5.20487034369133e-14
};
\addlegendentry{$\Phi_1$}
\addplot [very thick, blue, mark=asterisk, mark size=2, mark options={solid}]
table {%
0 0.235387976677009
1 0.520087224456387
2 0.564630634025314
3 0.399293531233454
4 0.2050384588203
5 0.0802592303098457
6 0.0241663463373089
7 0.00531833811149358
8 0.000626824451680268
9 -0.000113425487911447
10 -0.000100242219766853
11 -3.91200153536011e-05
12 -1.15536891735244e-05
13 -2.87641829776577e-06
14 -6.30530945166194e-07
15 -1.24569294288559e-07
16 -2.25078277681385e-08
17 -3.75760342701318e-09
18 -5.8406002531129e-10
19 -8.50303533894261e-11
20 -1.16513379310524e-11
};
\addlegendentry{$\Phi_2$}
\end{axis}

\end{tikzpicture}
    \caption{Impact of coalition strategy as per individual data type on normalized utility of devices.}
    \label{fig:individual_utility}
\end{figure}
	\subsection{Analysis of coalition strategy}
	\rv{We begin with the evaluation and analysis of the coalition strategy developed through the MAJP solution. Fig.~\ref{fig:individual_utility} shows the impact of switch operation between coalition groups on the normalized utility for each device and the identified data types. As discussed before, while the data similarity constraints and the cost of the coalition are satisfied, any deviation of the sellers to the group different from their true data type consequently lowers its utility. This hints the seller will opt to join an appropriate coalition and undergo data trading in a group so as to maximize their utilities. Such strategies of distributed devices lead to a stable solution, consequently, as argued in Theorem 2 and validated in Fig.~\ref{fig:individual_utility}. Another observation we have from the figure is as the data are not perfectly correlated, the impact of information leakage won't penalize the normalized utility value to zero but only lowers it. Furthermore, interestingly, we have an intuitive result in Fig.~\ref{fig:individual_utility} -- i.e., joining a nearby coalition group is more beneficial for the sellers as the impact of information leakage is higher otherwise. This goes along in line with our prior analysis of the system model.} In Fig.~\ref{fig:complexity}, we validate the sub-linear complexity of the MAJP solution. For this, we set $\Pi\in\{1,2,3,4,5\}$ and include the number of devices per group as 1, 10, 20, 30, and 40, respectively. We shuffle the devices and their data type and compare the execution of the MAJP solution with the Optimal \cite{papadimitriou1998combinatorial}. The combinatorial nature of the coalition formation with the increased number of partitions and the number of associated devices results in the Optimal solution being computationally expensive as compared to the proposed MAJP solution. 
	

\begin{figure}
\centering
\begin{tikzpicture}

\definecolor{darkgray176}{RGB}{176,176,176}
\definecolor{lightgray204}{RGB}{204,204,204}
\definecolor{orange}{RGB}{255,165,0}

\begin{axis}[
legend cell align={left},
legend style={fill opacity=0.8, draw opacity=1, text opacity=1, draw=lightgray204},
tick align=outside,
tick pos=left,
x grid style={darkgray176},
xlabel={Number of devices (\(\displaystyle M\))},
xmin=-0.2, xmax=4.2,
xtick style={color=black},
xtick={1,2,3,4},
xticklabels={10,20,30,40},
y grid style={darkgray176},
ylabel={Execution time (ms)},
ymin=0, ymax=600,
ytick style={color=black}
]
\addplot [very thick, orange, mark=asterisk, mark size=2, mark options={solid}]
table {%
0 26.1512051684126
1 37.8442314920658
2 104.10515772825
3 191.222296543881
4 291.338011818933
};
\addlegendentry{Optimal}
\addplot [very thick, blue, mark=*, mark size=2, mark options={solid}]
table {%
0 11.9879228546131
1 20.410515772825
2 29.5135348519869
3 39.1222296543881
4 49.1338011818933
};
\addlegendentry{MAJP}
\end{axis}
\end{tikzpicture}
\caption{Computational complexity in terms of the execution time (in $ms$).}
\label{fig:complexity}
\end{figure}
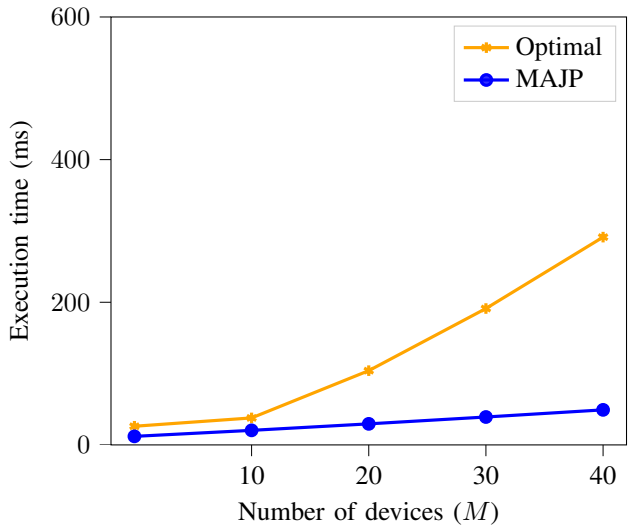
	Next, we consider the following intuitive baseline along to show the gain of adopting a coalition strategy than individual interaction with the learner under a scenario with information leakage.
	\begin{itemize}
		\item Non-cooperative: The learner exploits data properties between the sellers and imposes price depression. 
		\item MAJP solution (Cooperative): The pricing allocation follows the proposed solution approach in Algorithm~\ref{Algorithm}. 
	\end{itemize}
	For this evaluation, we reuse the linear pricing scheme with a log-concave utility on the coalition strategy adopted by devices with a similar data type to lower information leakage, as illustrated in Fig.~\ref{fig:infoleak}. For simplicity, we set $\Pi\in\{1,2\}$ and include the number of devices per group as 10, 20, 30, and 40. The results are then obtained following Monte-Carlo simulations to check and validate the consistency of the obtained results. In Fig.~\ref{fig:comparision}, we observe the proposed MAJP solution provides a gain of up to $32.72\%$ while imposing collaborative interaction between the devices with a similar data type. Interestingly, we also observe an almost flat payoff when devices opt for a non-cooperative strategy. This is reasonable given the value of information leakage with a fixed similarity in the number of data samples across devices. In this manner, the sellers benefit from the coalition to tackle price depression and information leakage to setup a trusted IoT data market. 
\begin{figure}
    \centering
    \begin{tikzpicture}

\definecolor{darkgray176}{RGB}{176,176,176}
\definecolor{lightgray204}{RGB}{204,204,204}
\definecolor{orange}{RGB}{255,165,0}

\begin{axis}[
legend cell align={left},
legend style={
  fill opacity=0.8,
  draw opacity=1,
  text opacity=1,
  at={(0.03,0.97)},
  anchor=north west,
  draw=lightgray204
},
tick align=outside,
tick pos=left,
x grid style={darkgray176},
xlabel={Number of devices (\(\displaystyle M\))},
xmin=-0.15, xmax=3.15,
xtick style={color=black},
xtick={0,1,2,3},
xticklabels={10,20,30,40},
y grid style={darkgray176},
ylabel={Average payoff per device},
ymin=0.52669776447995, ymax=0.73181197786969,
ytick style={color=black}
]
\addplot [very thick, blue, mark=*, mark size=2, mark options={solid}]
table {%
0 0.621090147742752
1 0.688500595638408
2 0.710793187557414
3 0.722488604533792
};
\addlegendentry{MAJP Solution}
\addplot [very thick, orange, mark=asterisk, mark size=2, mark options={solid}]
table {%
0 0.536021137815847
1 0.550443564214015
2 0.556123947735371
3 0.55928356513029
};
\addlegendentry{Non-cooperative}
\end{axis}

\end{tikzpicture}
    \caption{Performance comparison in terms of the average normalized payoff per device in the coalition.}
    \label{fig:comparision}
\end{figure}
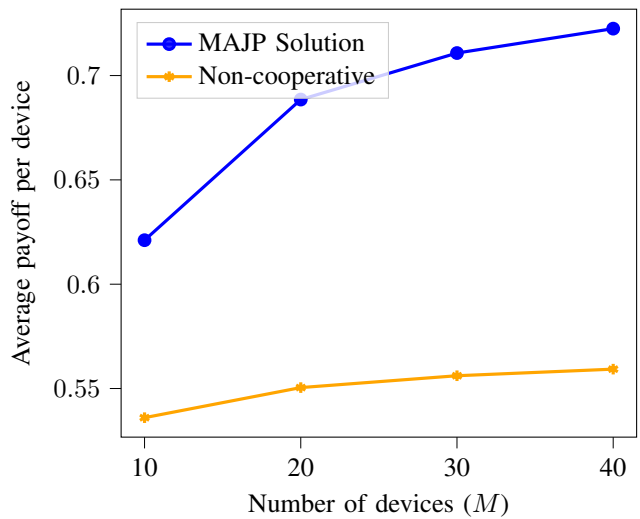
	\section{Conclusion and Discussion}\label{sec:conclusion}
	In this work, we have proposed an approach that establishes a trusted IoT data market by reinforcing collaboration opportunities between devices with correlated data to avoid information leakage. We set out to tackle the challenges posed due to the loop of mistrust in the data market; we jointly study three research questions (as indicated in Q1, Q2, and Q3), where we have shown devices with similar data types can cooperate in dealing with the price depression, data rivalry, and uncontrolled participation issues in the data market. We have formalized a network-wide optimization problem that maximizes the social value of coalition between the IoT devices of similar data types while minimizing the overall costs, defined in terms of network externalities, i.e., the impact of information leakage due to data correlation and the opportunity costs. The formulated problem is intractable due to binary constraint and is hard to solve directly given the presence of private information; thereby, we have developed a novel cooperative protocol, namely MAJP, that offered a sub-linear complexity in obtaining the solution using a preference-based coalition strategy. To that end, we have shown, via statistical analysis and numerical evaluations, our proposed approach provides benefits (around $32.72\%$ gain) as compared to the non-cooperative baseline, revealing truthful participation of devices without uncontrolled competition due to the information leakage and data rivalry.
	
	\rrv{We believe that our work and related works on data valuation/markets may have, over the long term, an impact on the way data privacy is regulated, as it generalises the current dominant paradigms of free vs. fully private data. The proposed approach could open additional benefits.} For example, knowing other's data properties a priori also indicates devices can learn when it is reasonable to collaborate for training learning models, as discussed in \cite{mpc}. \rv{An interesting direction for future work is to consider a more practical network setup with intermittent links and resource constraints for IoT devices. Another aspect is to better quantify the amount of privacy leakage by using notions of differential privacy \cite{dwork2014algorithmic} or multi-party computation \cite{mpc} and develop closer-to-the-real-world utility models. \rrv{Together with data privacy, security aspects during data trading, for instance, the provenance and authenticity of data, should be explored. We also foresee challenges in implementing the proposed solution in a practical scenario, given the scale of additional signalling and accuracy-complexity trade-offs required to identify the data properties.} As discussed, the merits of the proposed method are a stable, low-complexity, weak Pareto-optimal solution, which seems challenging to guarantee in an online setting. To that end, it would be interesting to study the scalability issues in a purely distributed network architecture.}
	\bibliographystyle{ieeetr}
	\bibliography{ref}
	\vskip -2.2\baselineskip plus -1fil
	\begin{IEEEbiography}[{\includegraphics[width=1in,height=1.25in,clip,keepaspectratio]{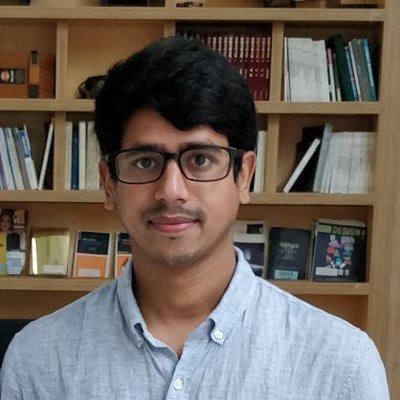}}]
		{\bf Shashi Raj Pandey} (Member, IEEE) received his  B.E. degree in Electrical and Electronics with a specialization in communication engineering from Kathmandu University, Nepal, and the Ph.D. degree in Computer Science and Engineering from Kyung Hee University, Seoul, South Korea. He is currently an Assistant Professor at the Department of Electronics Systems, Aalborg university. Prior, he was a Postdoctoral Researcher at the Connectivity Section, Aalborg University from 2021 to 2023. He served as a Network Engineer at Huawei Technologies Nepal Co. Pvt. Ltd, Nepal from 2013 to 2016.  His research interests include network economics, game theory, wireless communications, data markets and distributed machine learning. He was the receipt of the Best Paper Award at several conferences, including IEICE APNOMS 2019. He was a Member at Large at the IEEE Communication Society Young Professionals 2020 -- 2021. He currently serves as a  Member at Large in the IEEE Communication Society On-Line Content Board and is in the editorial advisory board of IEEE's Spectrum The Institute 2022 -- 2024. He is an affiliated member of the Pioneer Center for AI, Denmark.
	\end{IEEEbiography}
	\vskip -2.2\baselineskip plus -1fil
	\begin{IEEEbiography}[{\includegraphics[width=1.2in,height=1.25in,clip,keepaspectratio]{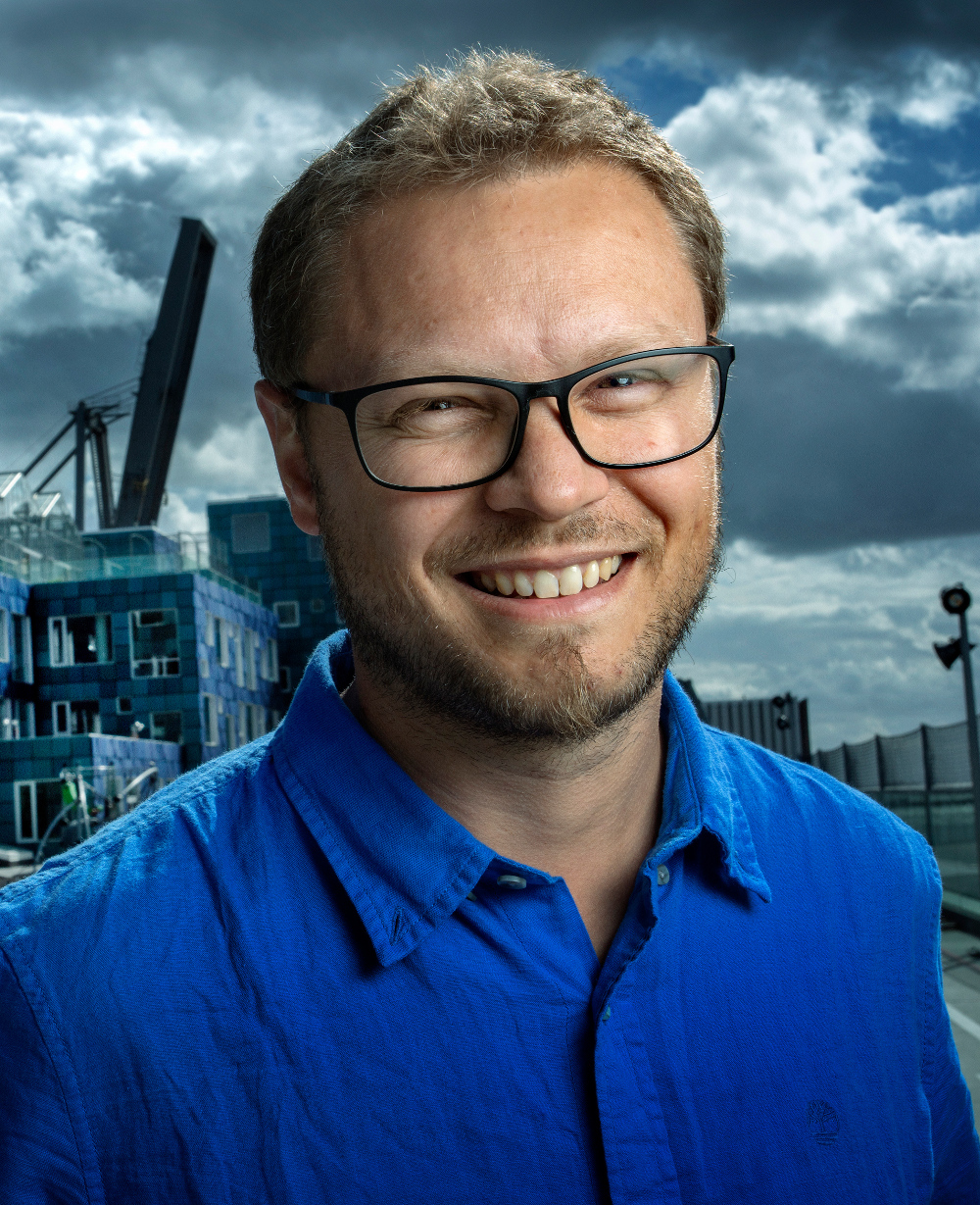}}]
		{\bf Pierre Pinson} (Fellow, IEEE) received the M.Sc. degree in applied mathematics from the National Institute of Applied Sciences (INSA), Toulouse, France, in 2002 and the Ph.D. degree in energetics from Ecole des Mines de Paris, France, in 2006. He is the chair of data-centric design engineering at Imperial College London, United Kingdom, Dyson School of Design Engineering. He is also an affiliated professor of operations research and analytics with the Technical University of Denmark and a chief scientist at Halfspace (Denmark). He is the editor-in-chief of the International Journal of Forecasting. His research interests include analytics, forecasting, optimization and game theory, with application to energy systems mostly, but also logistics, weather-driven industries and business analytics. He is a Fellow of the IEEE, an INFORMS member and a director of the International Institute of Forecasters (IIF).
	\end{IEEEbiography}
	\vskip -2.2\baselineskip plus -1fil
	\begin{IEEEbiography}[{\includegraphics[width=1in,height=1.25in,clip,keepaspectratio]{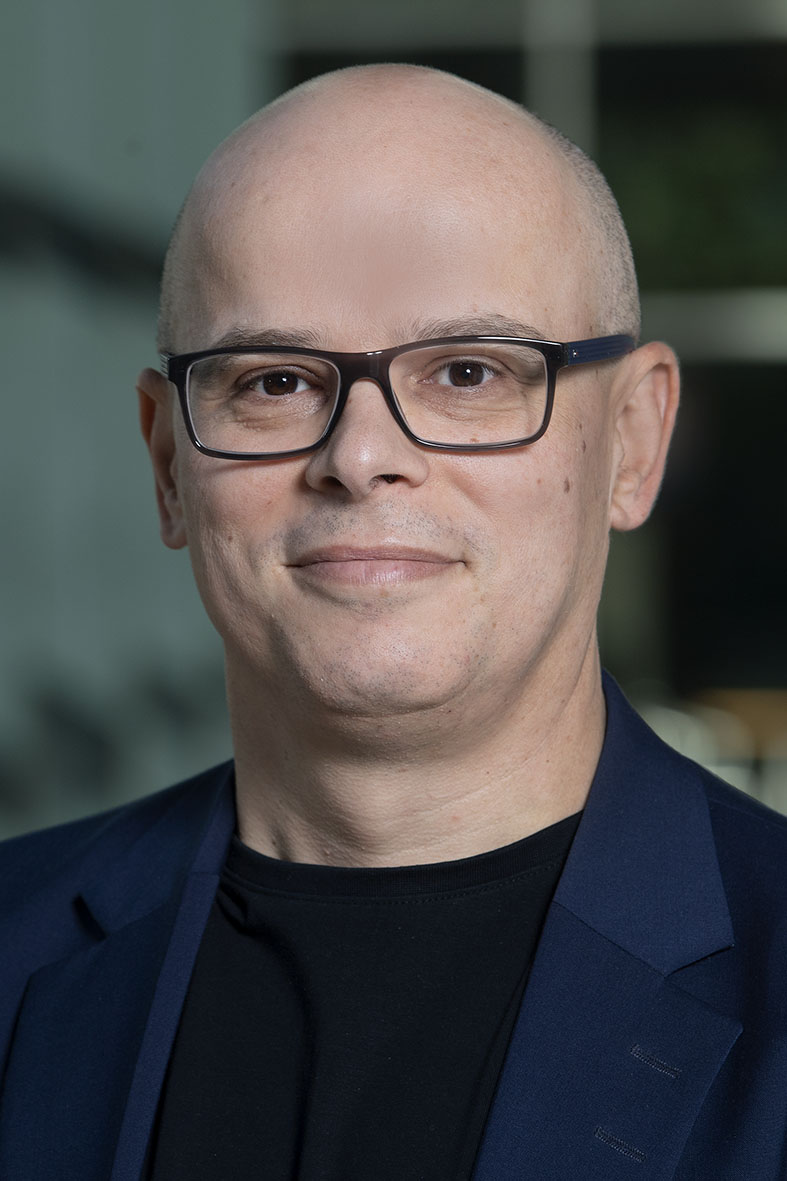}}]
		{\bf Petar Popovski} (Fellow, IEEE) is a Professor at Aalborg University, where he heads the section on Connectivity and a Visiting Excellence Chair at the University of Bremen. He received his Dipl.-Ing and M. Sc. degrees in communication engineering from the University of Sts. Cyril and Methodius in Skopje and the Ph.D. degree from Aalborg University in 2005. He received an ERC Consolidator Grant (2015), the Danish Elite Researcher award (2016), IEEE Fred W. Ellersick prize (2016), IEEE Stephen O. Rice prize (2018), Technical Achievement Award from the IEEE Technical Committee on Smart Grid Communications (2019), the Danish Telecommunication Prize (2020) and Villum Investigator Grant (2021). He was a Member at Large at the Board of Governors in IEEE Communication Society 2019-2021. He is currently an Editor-in-Chief of IEEEE JOURNAL ON SELECTED AREAS IN COMMUNICATIONS. He also serves as a Vice-Chair of the IEEE Communication Theory Technical Committee and the Steering Committee of IEEE TRANSACTIONS ON GREEN COMMUNICATIONS AND NETWORKING. Prof. Popovski was the General Chair for IEEE SmartGridComm 2018 and IEEE Communication Theory Workshop 2019. His research interests are in the area of wireless communication and communication theory. He authored the book ``Wireless Connectivity: An Intuitive and Fundamental Guide'', published by Wiley in 2020.
	\end{IEEEbiography}	
\end{document}